\newtheorem{lemma}{Lemma}
\newtheorem{theorem}[lemma]{Theorem}
\newtheorem{corollary}[lemma]{Corollary}
\newtheorem{observation}[lemma]{Observation}
\newtheorem{definition}[lemma]{Definition}
\newcommand{\hide}[1]{\relax}
\newcommand{\cost}{{\tt distcost}}
\newcommand{\completecost}{{\tt cost}}
\title{On the Tree Conjecture for the Network Creation Game}
\author{Davide Bilò\thanks{Department of Humanities and Social Sciences, University of Sassari, Italy, \texttt{davidebilo@uniss.it}} 
\and Pascal Lenzner\thanks{Algorithm Engineering Group, Hasso Plattner Institute Potsdam, Germany \texttt{pascal.lenzner@hpi.de}}
}
\date{~}
\begin{document}

\maketitle

\begin{abstract}
\noindent Selfish Network Creation focuses on modeling real world networks from a game-theoretic point of view. One of the classic models by Fabrikant et al.~[PODC'03] is the network creation game, where agents correspond to nodes in a network which buy incident edges for the price of $\alpha$ per edge to minimize their total distance to all other nodes. The model is well-studied but still has intriguing open problems. The most famous conjectures state that the price of anarchy is constant for all $\alpha$ and that for $\alpha \geq n$ all equilibrium networks are trees.

We introduce a novel technique for analyzing stable networks for high edge-price $\alpha$ and employ it to improve on the best known bounds for both conjectures. In particular we show that for $\alpha > 4n-13$ all equilibrium networks must be trees, which implies a constant price of anarchy for this range of $\alpha$. Moreover, we also improve the constant upper bound on the price of anarchy for equilibrium trees.
 \end{abstract}
 
 \section{Introduction}
Many important networks, e.g. the Internet or social networks, have been created in a decentralized way by selfishly acting agents. Modeling and understanding such networks is an important challenge for researchers in the fields of Computer Science, Network Science, Economics and Social Sciences. A significant part of this research focuses on assessing the impact of the agents' selfish behavior on the overall network quality measured by the \emph{price of anarchy}~\cite{KP99}. Clearly, if there is no or little coordination among the egoistic agents, then it cannot be expected that the obtained networks minimize the social cost. The reason for this is that each agent aims to improve the network quality for herself while minimizing the spent cost. However, empirical observations, e.g. the famous small-world phenomenon~\cite{TM67,K00,Bar16}, suggest that selfishly built networks are indeed very efficient in terms of the overall cost and of the individually perceived service quality. Thus, it is a main challenge to justify these observations analytically. 

A very promising approach towards this justification is to model the creation of a network as a strategic game which yields networks as equilibrium outcomes and then to investigate the quality of these networks. For this, a thorough understanding of the structural properties of such equilibrium networks is the key.  

We contribute to this endeavor by providing new insights into the structure of equilibrium networks for one of the classical models of selfish network creation~\cite{Fab03}. In this model, agents correspond to nodes in a network and can buy costly links to other nodes to minimize their total distance in the created network. Our insights yield improved bounds on the price of anarchy and significant progress towards settling the so-called tree conjecture~\cite{Fab03,MMM13}.

\subsection{Model and Definitions}
We consider the classical \emph{network creation game} as introduced by Fabrikant et al.~\cite{Fab03}. There are $n$ agents $V$, which correspond to nodes in a network, who want to create a connected network among themselves. 
Each agent selfishly strives for minimizing her cost for creating network links while maximizing her own connection quality. All edges in the network are undirected and unweighted and agents can create any incident edge for the price of $\alpha>0$, where $\alpha$ is a fixed parameter of the game. The strategy $S_u \subseteq V\setminus\{u\}$ of an agent $u$ denotes which edges are bought by this agent, that is, agent $u$ is willing to create (and pay for) all the edges $(u,x)$, for all $x \in S_u$. Let $\mathbf{s}$ be the $n$-dimensional vector of the strategies of all agents. The strategy-vector $\mathbf{s}$ induces an undirected network $G(\mathbf{s}) = (V,E(\mathbf{s}))$, where for each edge $(u,v) \in E(\mathbf{s})$ we have $v \in S_{u}$ or $u \in S_{v}$. If $v \in S_{u}$, then we say that agent $u$ is the \emph{owner} of edge $(u,v)$ or that agent $u$ \emph{buys} the edge $(u,v)$, otherwise, if $u \in S_{
v}$, then agent $v$ owns or buys the edge $(u,v)$.\footnote{No edge can have two owners in any equilibrium network. Hence, we will assume throughout the paper that each edge in $E(\mathbf{s})$ has a unique owner.} Since the created networks will heavily depend on $\alpha$ we emphasize this by writing $(G(\mathbf{s}),\alpha)$ instead of $G(\mathbf{s})$.
The \emph{cost} of an agent $u$ in the network $(G(\mathbf{s}),\alpha)$ is the sum of her cost for buying edges, called the \emph{creation cost}, and her cost for using the network, called the \emph{distance cost}, which depends on agent $u$'s distances to all other nodes within the network. The cost of $u$ is defined as $$\completecost(G(\mathbf{s}),\alpha,u) = \alpha|S_u|+\cost(G(\mathbf{s}),u),$$ where the distance cost is defined as 
$$\cost(G(\mathbf{s}),u) = \begin{cases}
                        \sum_{w\in V} d_{G(\mathbf{s})}(u,w), & \text{if $G(\mathbf{s})$ is connected}\\
                        \infty, & \text{otherwise.}
                                                                                                                                                                      \end{cases}$$
                                                                                                                                                                      Here $d_{G(\mathbf{s})}(u,w)$ denotes the length of a shortest path between $u$ and $w$ in the network $G(\mathbf{s})$. We will mostly omit the reference to the strategy vector, since it is clear that a strategy vector directly induces a network and vice versa. Moreover, if the network is clear from the context, then we will also omit the reference to the network, e.g. writing $\cost(u)$ instead of $\cost(G,u)$. 
                                                                                                                                                                      
A network $(G(\mathbf{s}),\alpha)$ is in \emph{pure Nash equilibrium} (NE), if no agent can unilaterally change her strategy to strictly decrease her cost. That is, in a NE network no agent can profit by a strategy change if all other agents stick to their strategies. Since in a NE network no agent wants to change the network, we call them \emph{stable}.

The \emph{social cost}, denoted $\completecost(G(\mathbf{s}),\alpha)$, of a network $(G(\mathbf{s}),\alpha)$ is the sum of the cost of all agents, that is, $\completecost(G(\mathbf{s}),\alpha) = \sum_{u\in V}\completecost(G(\mathbf{s}),\alpha,u)$. Let OPT$_n$ be the minimum social cost of a $n$ agent network and let maxNE$_n$ be the maximum social cost of any NE network on $n$ agents. The \emph{price of anarchy} (PoA)~\cite{KP99} is the maximum over all $n$ of the ratio $\frac{\text{maxNE}_n}{\text{OPT}_n}$.

Let $G = (V,E)$ be any undirected connected graph with $n$ vertices. 
A \emph{cut-vertex} $x$ of $G$ is a vertex with the property that $G$ with vertex $x$ removed contains at least two connected components. 
We say that $G$ is \emph{biconnected} if $n \geq 3$ and $G$ contains no cut-vertex. A \emph{biconnected component} $H$ of $G$ is a maximal induced subgraph of $G$ which is also biconnected. Note that we rule out trivial biconnected components which contain exactly one edge. Thus, there exist at least two vertex-disjoint paths between any pair of vertices $x,y$ in a biconnected component $H$, which implies that there exists a simple cycle containing $x$ and~$y$.

\subsection{Related Work}
Network creation games, as defined above, were introduced by Fabrikant et al.~\cite{Fab03}. They gave the first general bound of $\mathcal{O}(\sqrt{\alpha})$ on the PoA 
and they conjectured that above some constant edge-price all NE networks are trees. This conjecture, called the \emph{tree conjecture}, is especially interesting since they also showed that tree networks in NE have constant PoA. In particular, they proved that the PoA of stable tree networks is at most $5$. Interestingly, the tree conjecture in its general version was later disproved by Albers et al.~\cite{Al14}. However, non-tree NE networks are known only when $\alpha < n$, in particular, for every $\varepsilon > 0$, there exist non-tree NE networks with $\alpha \leq n-\varepsilon$~\cite{MMM13}. It is believed that for $\alpha \geq n$ the tree conjecture may be true. Settling this claim is currently a major open problem and there has been a series of papers which improved bounds concerning the tree conjecture.

First, Albers et al.~\cite{Al14} proved that for $\alpha \geq 12 n\log n$ every NE network is a tree. Then, using a technique based on the average degree of the biconnected component, this was significantly improved to $\alpha > 273n$ by Mihal\'{a}k \& Schlegel~\cite{MS10} and even further to $\alpha \geq 65n$ by Mamageishvili et al.~\cite{MMM13}. 
The main idea of this average degree technique is to prove a lower and an upper bound on the average degree of the unique biconnected component in any equilibrium network. The lower bound has the form "for $\alpha > c_1 n$ the average degree is at least $c_2$" and the upper bound has the form "for $\alpha > c_3 n$ the average degree is at most $f(\alpha)$", where $c_1, c_2, c_3$ are constants and $f$ is a function which monotonically decreases in $\alpha$. For large enough $\alpha$ both bounds contradict each other, which proves that equilibrium networks for this $\alpha$ cannot have a biconnected component and thus must be trees.
Very recently a preprint by Àlvarez \& Messegué~\cite{AM17} was announced which invokes the average degree technique with a stronger lower bound. This then yields a contradiction already for $\alpha > 17n$. For their stronger lower bound the authors use that in every minimal cycle (we call them "min cycles") of an equilibrium network all agents in the cycle buy exactly one edge of the cycle. This fact has been independently established by us~\cite{L14} and we also use it.  

The currently best general upper bound of $2^{\mathcal{O}(\sqrt{\log n})}$ on the PoA is due to 
Demaine et al.~\cite{De07} and it is known that the PoA is constant if $\alpha < n^{1-\epsilon}$ for any fixed $\epsilon \geq \frac{1}{\log n}$~\cite{De07}. Thus, the PoA was shown to be constant for almost all $\alpha$, except for the range between $n^{1-\epsilon}$, for any fixed $\epsilon \geq \frac{1}{\log n}$, and $\alpha < 65n$ (or $\alpha \leq 9n$ which is claimed in~\cite{AM17}). It is widely conjectured that the PoA is constant for all $\alpha$ and settling this open question is a long standing problem in the field. A constant PoA proves that agents create socially close-to-optimal networks even without central coordination. Quite recently, a constant PoA was proven by Chauhan et al.~\cite{CLMM17} for a version with non-uniform edge prices. In contrast, non-constant lower bounds on the PoA have been proven for local versions of the network creation game by Bilò et al.~\cite{BiloGLP14,BiloGLP16} and Cord-Landwehr \& Lenzner~\cite{CL15}.

For other variants and aspects of network creation games, we refer the reader to~\cite{JW96,BG00,CP05,BHN08,De09,AFM09,Kli11,L11,L12,MS12,KL13,ADHL13,MMO14,CMadH14,Ehs15,Bilo15,Bilo15_max,ABU15,GJKKM16,CLMM16,ABDMS16,AM16,JK17}.

\subsection{Our Contribution}
In this paper we introduce a new technique for analyzing stable non-tree networks for high edge-price $\alpha$ and use it to improve on the current best lower bound for $\alpha$ for which all stable networks must be trees. In particular, we prove that for $\alpha > 4n-13$ any stable network must be a tree (see Section~\ref{sec_treeconjecture}). This is a significant improvement over the known bound of $\alpha > 65n$ by Mamageishvili et al.~\cite{MMM13} and the recently claimed bound of $\alpha > 17n$ by Àlvarez \& Messegué~\cite{AM17}. Since the price of anarchy for stable tree networks is constant~\cite{Fab03}, our bound directly implies a constant price of anarchy for $\alpha > 4n-13$. Moreover, in Section~\ref{sec_PoA}, we also give a refined analysis of the price of anarchy of stable tree networks and thereby improve the best known constant upper bound for stable trees.     

Thus, we make significant progress towards settling the tree conjecture in network creation games and we enlarge the range of $\alpha$ for which the price of anarchy is provably constant.

Our new technique exploits properties of cycles in stable networks by focusing on \emph{critical pairs}, \emph{strong critical pairs} and \emph{min cycles}. The latter have been introduced in our earlier work~\cite{L14} and are also used in the preprint by Àlvarez \& Messegué~\cite{AM17}. However, in contrast to the last attempts for settling the tree conjecture~\cite{MS10,MMM13,AM17}, we do not rely on the average degree technique. Instead we propose a more direct and entirely structural approach using (strong) critical pairs in combination with min cycles. Besides giving better bounds with a simpler technique, we believe that this approach is better suited for finally resolving the tree conjecture. 

\section{Improving the Range of $\alpha$ of the Tree Conjecture}\label{sec_treeconjecture}
In this section we prove our main result, that is, we show that for $\alpha > 4n-13$, every NE network $(G,\alpha)$ with $n\geq 4$ nodes must be a tree. 

We proceed by first establishing properties of cycles in stable networks. Then we introduce the key concepts called critical pairs, strong critical pairs and min cycles. Finally, we provide the last ingredient, which is a critical pair with a specific additional property, and combine all ingredients to obtain the claimed result.
\subsection{Properties of Cycles in Stable Networks}
We begin by showing that for large values of $\alpha$, stable networks cannot contain cycles of length either 3 or 4.

\begin{lemma}\label{lm:no_3_cycle}
For $\alpha > \frac{n-1}{2}$, no stable network $(G,\alpha)$ contains a cycle of length 3.
\end{lemma}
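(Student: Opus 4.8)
The plan is to argue by contradiction: assume a stable network $(G,\alpha)$ contains a triangle on vertices $a,b,c$ and exhibit a profitable deviation whenever $\alpha>\frac{n-1}{2}$. The only move I will use is edge deletion. If an agent $x$ owns an edge $(x,y)$ of the triangle, she can delete it, saving $\alpha$ in creation cost, and reconnect to $y$ through the third triangle vertex $z$ via the path $x$–$z$–$y$ of length $2$. This path exists and keeps $G$ connected, since $(x,y)$ lies on the triangle and is therefore not a bridge. Stability then forces the induced increase in $x$'s distance cost to be at least $\alpha$, and the whole proof reduces to bounding that increase.

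To bound it, I would first partition the vertices by their closest triangle vertex. Because $a,b,c$ are mutually adjacent, for every $w$ the three values $d_G(w,a),d_G(w,b),d_G(w,c)$ pairwise differ by at most $1$, so $w$ is either strictly closest to a unique one of $a,b,c$ or tied. Let $S_y$ be the set of vertices strictly closest to $y$; the three sets $S_a,S_b,S_c$ are pairwise disjoint and $y\in S_y$. The key claim is: when $x$ deletes $(x,y)$, every vertex whose distance from $x$ strictly increases lies in $S_y$, and its distance grows by at most $1$. Indeed, if all shortest $x$–$w$ paths begin with $(x,y)$ then $d_G(x,w)=1+d_G(y,w)$, and since routing through $z$ was not already optimal we get $d_G(y,w)<d_G(z,w)$ and $d_G(y,w)<d_G(x,w)$, i.e. $w\in S_y$; the detour $x$–$z$–$y$–$\cdots$–$w$ then has length $2+d_G(y,w)=d_G(x,w)+1$. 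Hence the deletion raises $x$'s distance cost by at most $|S_y|$, and stability yields $\alpha\le|S_y|$.

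Finally I would combine two of these constraints. Each triangle edge is owned by exactly one endpoint; call the other endpoint its \emph{free} vertex, so each deletion gives $\alpha\le|S_y|$ with $y$ the free vertex of the deleted edge. Since the edge $(a,b)$ is not incident to $c$, the three free vertices cannot all coincide, so at least two distinct vertices $y_1\neq y_2$ among $\{a,b,c\}$ occur as free vertices. The two corresponding deletions give $\alpha\le|S_{y_1}|$ and $\alpha\le|S_{y_2}|$, and as $S_{y_1},S_{y_2}$ are disjoint and exclude the third vertex $y_3\in S_{y_3}$, we obtain $2\alpha\le|S_{y_1}|+|S_{y_2}|\le n-1$, contradicting $\alpha>\frac{n-1}{2}$.

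I expect the distance-increase estimate to be the delicate step. One must verify precisely that the set of vertices whose distance actually increases is contained in $S_y$—which is what makes the two final constraints involve disjoint vertex sets—and that the reroute through $z$ does not secretly use the deleted edge (equivalently, that shortest $y$–$w$ paths avoid $(x,y)$, so the length-$2+d_G(y,w)$ detour is valid). Once this is nailed down, the ownership bookkeeping collapses to the trivial pigeonhole fact that the three free vertices are not all equal, so no separate case analysis on who buys which edge is needed.
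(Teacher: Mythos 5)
Your proposal is correct and takes essentially the same approach as the paper: delete a triangle edge, reroute through the third triangle vertex, and show that only vertices strictly closest to the non-owning endpoint can have their distance increase (each by at most $1$), so stability forces $\alpha \le |S_y|$ for disjoint sets $S_y$. The only difference is cosmetic bookkeeping at the end: the paper uses a single deletion together with a w.l.o.g.\ assumption that $|V_0|$ is the largest of the three disjoint sets (giving $|V_2| \le \frac{n-1}{2}$ directly), whereas you sum two deletion constraints obtained via the pigeonhole observation on edge ownership --- both routes yield $2\alpha \le n-1$ and the same contradiction.
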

\begin{proof}
Let $(G,\alpha)$ be a stable network for a fixed value of $\alpha > \frac{n-1}{2}$. For the sake of contradiction, assume that $G$ contains a cycle $C$ of length $3$. Assume that $V(C)=\{u_0,u_1,u_2\}$ and that $C$ contains the three edges $(u_0,u_1)$, $(u_1,u_2)$, and $(u_2,u_0)$. Let $V_i = \big\{x \in V \mid d_G(u_i,x) < d_G(u_j,x), \forall j \neq i\big\}$.
Observe that, for every $i\in\{0,1,2\}$ we have $|V_i|\geq 1$, as $u_i \in V_i$. Furthermore, all the $V_i$'s are pairwise disjoint. W.l.o.g., assume that $|V_{0}|= \max\big\{|V_0|,|V_1|,|V_2|\big\}$. Furthermore, w.l.o.g., assume that $u_1$ buys the edge $(u_1,u_2)$. Consider the strategy change in which agent $u_1$ deletes the edge $(u_1,u_2)$. The building cost of the agent decreases by $\alpha$ while her distance cost increases by at most $|V_2|$. Since $|V_2|\leq |V_0|$, from $|V_0|+|V_1|+|V_2| \leq n$ we obtain $|V_2|\leq \frac{n-1}{2}$. Since $G$ is stable, $\frac{n-1}{2}-\alpha \geq 0$, i.e., $\alpha\leq \frac{n-1}{2}$, a contradiction.
\end{proof}

\begin{lemma}\label{lm:no_4_cycle}
For $\alpha > n-2$, no stable network $(G,\alpha)$ contains a cycle of length~4.
\end{lemma}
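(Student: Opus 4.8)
The plan is to mirror the structure of the proof of Lemma~\ref{lm:no_3_cycle}, but with one essential new ingredient forced by the geometry of $4$-cycles, namely that opposite vertices sit at distance $2$. First I would note that for $n\geq 4$ we have $\alpha>n-2>\tfrac{n-1}{2}$, so Lemma~\ref{lm:no_3_cycle} applies and $(G,\alpha)$ contains no triangle. Consequently any $4$-cycle $C$ with $V(C)=\{u_0,u_1,u_2,u_3\}$ and edges $(u_0,u_1),(u_1,u_2),(u_2,u_3),(u_3,u_0)$ must be \emph{chordless}: a chord $(u_0,u_2)$ or $(u_1,u_3)$ would create a $3$-cycle. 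Hence $d_G(u_0,u_2)=d_G(u_1,u_3)=2$. As in the triangle case I set $V_i=\big\{x\in V \mid d_G(u_i,x)<d_G(u_j,x),\ \forall j\neq i\big\}$; these four territories are pairwise disjoint, each contains $u_i$ (so $|V_i|\geq 1$), and together with the ``boundary'' vertices equidistant to two cycle vertices they partition $V$.

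The deviation I would analyze is a single edge deletion, chosen carefully. Relabel so that $|V_1|=\max_i|V_i|$ and consider the edge $(u_2,u_3)$, which is the edge \emph{not incident} to $u_1$; assume w.l.o.g. that its owner is $u_2$ (the case of owner $u_3$ is symmetric, with the roles of $V_2,V_3$ swapped). I let $u_2$ delete $(u_2,u_3)$ and bound the resulting increase in $\cost(u_2)$. Only vertices $x$ all of whose shortest $u_2$-paths traverse $(u_2,u_3)$ are affected. For such $x$ one checks, writing $b_i=d_G(u_i,x)$, that $b_3<b_1$, $b_3<b_2$ and $b_3\leq b_0$, so the affected set lies in $V_3$ together with the boundary between $u_3$ and $u_0$. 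After the deletion $u_2$ can still reach $x$ along the detour $u_2\to u_1\to u_0\to u_3\to x$, which has length $3+b_3=d_G(u_2,x)+2$ (all of $u_2u_1,u_1u_0,u_0u_3$ are cycle edges surviving the deletion, and the chordlessness guarantees the $u_3$--$x$ part is unaffected). Thus every affected vertex loses at most $2$; moreover, using $b_0\leq b_3+1$ one sees the loss is exactly $2$ only when $b_0=b_3+1$ and $b_1=b_3+2$, i.e.\ only for vertices strictly inside $V_3$, while boundary vertices lose at most $1$ (for them the shorter detour $u_2\to u_1\to u_0\to x$ of length $2+b_0$ is available).

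It then remains to count. The total increase is at most $2|V_3|+|\partial_{30}|$, where $\partial_{30}$ denotes the boundary between $u_3$ and $u_0$. Since $|V_3|\leq|V_1|$, this is at most $|V_1|+|V_3|+|\partial_{30}|$, and because $V_0,V_1,V_2,V_3,\partial_{30}$ are pairwise disjoint with $|V_0|,|V_2|\geq 1$, we get $|V_1|+|V_3|+|\partial_{30}|\leq n-|V_0|-|V_2|\leq n-2$. Stability of $(G,\alpha)$ forbids this profitable-looking deletion, which yields $\alpha\leq n-2$, contradicting $\alpha>n-2$. I expect the main obstacle to be precisely the distance-$2$ diagonal phenomenon: unlike the triangle case, a deleted cycle edge can cost an affected vertex $2$ rather than $1$, so a naive bound gives a spurious factor of~$2$. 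The crucial point that rescues the argument is that the factor-$2$ vertices are confined to the single territory $V_3$ \emph{and} that by deleting the edge opposite to the largest territory we can charge $2|V_3|$ against $|V_1|+|V_3|$; care is also needed to confirm that the affected set really is contained in one territory plus its boundary, and that the claimed length-$3$ detour indeed uses no edge whose own shortest-path lengths were disturbed by the deletion.
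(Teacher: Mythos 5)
Your proof is correct, but it follows a genuinely different route from the paper's. The paper's proof of Lemma~\ref{lm:no_4_cycle} does \emph{not} mirror Lemma~\ref{lm:no_3_cycle}; instead it splits into two cases by ownership. If some agent owns two edges of $C$, it uses a swap-plus-delete deviation ($u_0$ swaps $(u_0,u_1)$ to $(u_0,u_2)$ and deletes $(u_0,u_3)$), yielding $\alpha \leq n-3$. Otherwise the cycle is directed, and the paper lets $u_0$ delete $(u_0,u_1)$, bounding the increase by $2|V_1|+|Z_2|$ exactly as in your affected-set analysis, but the counting is an \emph{averaging} argument: relabel so that $|V_1|+|Z_2| = \min_i\big(|V_i|+|Z_{i+1}|\big) \leq n/4$, hence the increase is at most $n/2$. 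You eliminate the ownership case distinction entirely: by choosing the deleted edge so that both of its endpoints differ from the vertex owning the largest territory $V_1$, whoever owns that edge can delete it, and the doubled loss $2|V_3|$ (or $2|V_2|$ in the symmetric ownership case) is \emph{charged} against $|V_1|+|V_3|$ (resp. $|V_1|+|V_2|$) rather than averaged, giving $n-2$ directly from pairwise disjointness. The trade-off: the paper's averaging gives the stronger bound $n/2$ in the directed case but requires the separate two-owner case with its swap deviation; your charging gives $n-2$ uniformly with a single deviation type, which is all the lemma needs, and is the natural generalization of the paper's own triangle argument.

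One small repair to your justification: the survival of the $u_3$--$x$ (resp. $u_0$--$x$) tail of your detour after deleting $(u_2,u_3)$ is not a consequence of chordlessness. It follows because a shortest $u_3$--$x$ path passing through $u_2$ would give $b_3 \geq 1+b_2 = b_3+2$, which is absurd (and for boundary vertices, a shortest $u_0$--$x$ path through $u_2$ would give $b_0 \geq 1+b_2 = b_3+2 > b_0$). In fact chordlessness, and hence your appeal to Lemma~\ref{lm:no_3_cycle}, is not needed anywhere in the argument.
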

\begin{proof}
Let $(G,\alpha)$ be a stable network for a fixed value of $\alpha > n-2$. For the sake of contradiction, assume that $G$ contains a cycle $C$ of length $4$. Assume that $V(C)=\{u_0,u_1,u_2,u_3\}$ and that $C$ contains the four edges $(u_0,u_1)$, $(u_1,u_2)$, $(u_2,u_3)$, and $(u_3,u_0)$. 
For the rest of this proof, we assume that all indices are modulo 4 in order to simplify notation.
Let $V_i = \big\{x \in V \mid d_G(u_i,x) < d_G(u_j,x), \forall j \neq i\big\}$.
Observe that for every $i\in\{0,1,2,3\}$ we have $|V_i|\geq 1$, as $u_i \in V_i$.
Let
$Z_i =\big\{ x \in V \mid d_G(u_i,x)=d_G(u_{i-1},x) \text{ and }
			 d_G(u_i,x),d_G(u_{i-1},x) < d_G(u_j,x), \forall j \neq i, i-1\big\}$.
Observe that in the families of the sets $V_i$ and $Z_i$ every pair of sets is pairwise disjoint.

We now rule out the case in which an agent owns two edges of $C$. W.l.o.g., assume that agent $u_0$ owns the two edges $(u_0,u_1)$ and $(u_0,u_3)$. Consider the strategy change in which agent $u_0$ swaps\footnote{A \emph{swap} of edge $(a,b)$ to edge $(a,c)$ by agent $a$ who owns edge $(a,b)$ consists of deleting edge $(a,b)$ and buying edge $(a,c)$.} the edge $(u_0,u_1)$ with the edge $(u_0,u_2)$ and, at the same time, deletes the edge $(u_0,u_3)$. The creation cost of agent $u_0$ decreases by $\alpha$, while her distance cost increases by $|V_1|+|V_3|-|V_2|$. Since $(G,\alpha)$ is stable, agent $u_0$ has no incentive in deviating from her current strategy. 
Therefore, $|V_1|+|V_3|-|V_2|-\alpha \geq 0$, i.e., $\alpha \leq |V_1|+|V_3|-|V_2| \leq n-|V_0|-|V_2| -|V_2| \leq  n-3$, where the last but one inequality follows from the pairwise disjointness of all $V_i$ sets, which implies $|V_0| + |V_1| + |V_2| + |V_3| \leq n$. 
Since, $\alpha >n-2$, no agent can own two edges of $C$. Therefore, to prove the claim, we need to show that no agent can own a single edge of $C$. 

W.l.o.g., assume that for every $i\in \{0,1,2,3\}$ agent $u_i$ owns the edge $(u_i,u_{i+1})$. Moreover, w.l.o.g., assume that
$|V_1|+|Z_2| = \min_{0\leq i \leq 3} \big\{|V_i|+|Z_{i+1}|\big\}$.
Since 
$\sum_{0\leq i\leq 3}\big(|V_i|+|Z_{i+1}|\big)\leq n$,
we have that 
$|V_1|+|Z_2|\leq \frac{n}{4}$.

Consider the strategy change in which agent $u_0$ deletes the edge $(u_0,u_1)$. The creation cost of agent $u_0$ decreases by $\alpha$, while her distance cost increases by $2|V_1|+|Z_2|\leq \frac{n}{2}$. 

Since $(G,\alpha)$ is stable, agent $u_0$ has no incentive to deviate from her current strategy. Therefore, $\frac{n}{2}-\alpha\geq 0$, i.e., $\alpha \leq \frac{n}{2} \leq n-2$, when $n\geq 4$. We have obtained a contradiction.
\end{proof}

\begin{definition}[Directed Cycle]\label{def:directed_cycle}
 Let $C$ be a cycle of $(G,\alpha)$ of length $k$. We say that $C$ is {\em directed} if there is an ordering $u_0,\dots,u_{k-1}$ of its $k$ vertices such that, for every $i=0,\dots,k-1$, $(u_i,u_{(i+1)\mod k})$ is an edge of $C$ which is bought by agent $u_i$.
\end{definition}
\noindent We now show that if $\alpha$ is large enough, then directed cycles cannot be contained in a stable network as a biconnected component.

\begin{lemma}\label{lm:no_directed_cycle_as_biconnected_component}
For $\alpha > n-2$, no stable network $(G,\alpha)$ with $n\geq 6$ vertices contains a biconnected component which is also a directed cycle.
\end{lemma}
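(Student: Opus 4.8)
The plan is to turn the consistent orientation of the directed cycle into a family of cheap swap deviations, one per vertex, and then to reach a contradiction by averaging them. Let $C=u_0,\dots,u_{k-1}$ be the directed cycle, with $u_i$ owning $(u_i,u_{i+1})$ (indices mod $k$). Since $\alpha>n-2\ge\frac{n-1}{2}$ for $n\ge 3$, Lemmas~\ref{lm:no_3_cycle} and~\ref{lm:no_4_cycle} forbid cycles of length $3$ and $4$, so $k\ge 5$. Because $C$ is a \emph{biconnected component}, deleting any single $u_i$ detaches exactly the part of $G$ that is pending at $u_i$; hence every vertex of $G$ reaches $C$ through a unique gateway cycle vertex, the sets $V_i=\{x\mid d_G(u_i,x)<d_G(u_j,x)\ \forall j\neq i\}$ (as in the previous lemmas) partition $V$, and I write $n_i=|V_i|\ge 1$, so that $\sum_i n_i=n$.

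For the core deviation I would let $u_i$ swap its owned edge $(u_i,u_{i+1})$ for $(u_i,u_{i+2})$; this is a genuine new edge because a chord $(u_i,u_{i+2})$ would enlarge the block beyond a plain cycle, contradicting that the biconnected component \emph{is} a cycle. After the swap, $u_{i+1}$ becomes a pendant of $u_{i+2}$: the distance from $u_i$ to $u_{i+1}$ grows by $1$, the distance to $u_{i+t}$ shrinks by $1$ for every $2\le t\le\lfloor k/2\rfloor$, and all remaining distances are unchanged. Since each tree $V_{i+t}$ hangs off $u_{i+t}$, its whole contribution scales by $n_{i+t}$, so the change in $u_i$'s distance cost is
$$\delta_i=n_{i+1}-\sum_{t=2}^{\lfloor k/2\rfloor}n_{i+t}\qquad(\text{indices mod }k).$$
A swap leaves the creation cost unchanged, so stability forces $\delta_i\ge 0$ for every $i$.

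Summing $\delta_i\ge 0$ over all $i$ and using $\sum_i n_{i+1}=n$ and $\sum_i\sum_{t=2}^{\lfloor k/2\rfloor}n_{i+t}=(\lfloor k/2\rfloor-1)\,n$ yields $n\ge(\lfloor k/2\rfloor-1)\,n$. For $k\ge 6$ we have $\lfloor k/2\rfloor-1\ge 2$, hence $n\ge 2n$, a contradiction; this already disposes of every directed cycle of length at least $6$. The remaining case $k=5$ is where the averaging becomes exactly tight ($\lfloor 5/2\rfloor-1=1$), and I expect it to be the main obstacle. There $\delta_i=n_{i+1}-n_{i+2}$, so $\delta_i\ge 0$ for all $i$ forces $n_1\ge n_2\ge\cdots\ge n_0\ge n_1$, i.e.\ all $n_j$ are equal (otherwise some swap strictly improves and we are done). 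Equality gives $5\mid n$, and since $n\ge 6$ this means $n\ge 10$. For this symmetric configuration I would switch to the \emph{deletion} of $(u_i,u_{i+1})$, whose analogous bookkeeping raises $u_i$'s distance cost by $\Delta_i=3n_{i+1}+n_{i+2}=\tfrac{4}{5}n$; since $\tfrac{4}{5}n\le n-2<\alpha$ for $n\ge 10$, the deletion saves $\alpha$ but costs strictly less in distance, contradicting stability. Combining the two cases rules out a directed-cycle biconnected component for all $k\ge 5$.

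The delicate points I would check carefully are the distance bookkeeping (that after the swap every tree contribution scales exactly by its gateway's distance change, which relies on $C$ being a block so that no tree offers a shortcut between cycle vertices) and the $k=5$ boundary, where the clean swap inequality degenerates and must be supplemented by the deletion move together with the hypothesis $n\ge 6$.
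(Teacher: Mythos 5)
Your proof is correct and follows essentially the same route as the paper: the same swap deviation $(u_i,u_{i+1})\to(u_i,u_{i+2})$ measured against the partition sets $V_i$, the same case split $k\ge 6$ versus $k=5$, and in the all-equal pentagon case the identical deletion argument (distance increase $3n_{i+1}+n_{i+2}=\tfrac{4}{5}n$, forcing $5\mid n$, hence $n\ge 10$ and $\alpha\le n-2$). The only difference is cosmetic: for $k\ge 6$ you sum the swap inequalities over all agents to get $n\ge(\lfloor k/2\rfloor-1)n$, whereas the paper picks the single agent whose swap targets a maximum-size set and bounds the change by $|V_1|-|V_2|-|V_3|\le -1$; both close the case in one line.
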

\begin{proof}
Let $(G,\alpha)$ be a stable network for a fixed value of $\alpha > n-2$. Let $H$ be a biconnected component of $G$. For the sake of contradiction, assume that $H$ is a directed cycle of length $k$. We can apply Lemma~\ref{lm:no_4_cycle} to exclude the case in which $k=4$. Similarly, since $\alpha> n-2\geq \frac{n-1}{2}$ for every $n\geq 3$, we can use Lemma~\ref{lm:no_3_cycle} to exclude the  case in which $k=3$.

Let $u_0,\dots,u_{k-1}$ be the $k$ vertices of $H$ and, w.l.o.g., assume that every agent $u_i$ is buying an edge towards agent $u_{(i+1)\mod k}$. To simplify notation, in the rest of this proof we assume that all indices are modulo $k$. Let
$V_i = \big\{x \in V \mid d_G(u_i,x) < d_G(u_j,x), \forall j \neq i\big\}$.
Observe that $V_i$ is a partition of $V$. We divide the proof into two cases.

The first case occurs when $H$ is a cycle of length $k\geq 6$. W.l.o.g., assume that $|V_2| = \max_{0\leq i \leq k-1}|V_i|$. In this case, consider the strategy change of agent  $u_0$ when she swaps the edge $(u_0,u_1)$ with the edge $(u_0,u_2)$. The distance cost of agent $u_0$ increases by $|V_1|-|V_2|-|V_3|\leq -1$. Thus, agent $u_0$ has an improving strategy, a contradiction.

The second and last case occurs when $H$ is a cycle of length $k=5$. If $|V_i|\neq |V_j|$ for some $i,j\in\{0,1,2,3,4\}$, then there exists an $i \in \{0,1,2,3,4\}$ such that $|V_i|<|V_{i+1}|$. W.l.o.g., let $|V_1|<|V_2|$. The distance cost of agent $u_0$ when she swaps the edge $(u_0,u_1)$ with the edge $(u_0,u_2)$ increases by $|V_1|-|V_2|\leq -1$. Thus agent $u_0$ has an improving strategy, a contradiction. If $|V_0|=|V_1|=|V_2|=|V_3|=|V_4|$, then the increase in the overall cost incurred by agent $u_0$ when she deletes the edge $(u_0,u_1)$ would be equal to $3|V_1|+|V_2|-\alpha=\frac{4}{5}n-\alpha$. Since $G$ is stable and $n$ is a multiple of $5$, $\frac{4}{5}n-\alpha \geq 0$, i.e., $\alpha\leq \frac{4}{5}n\leq n-2$, for every $n\geq 10$, a contradiction. 
\end{proof}
\subsection{Critical Pairs}
\noindent The next definition introduces the concept of a (strong) critical pair. As we will see, (strong) critical pairs are the first key ingredient for our analysis. Essentially, we will show that stable networks cannot have critical pairs, if $\alpha$ is large enough.

\begin{definition}[Critical Pair]\label{definition:apx_critical_pair}
Let $(G,\alpha)$ be a non-tree network and let $H$ be a biconnected component of $G$.  
We say that $\langle v, u \rangle$ is a {\em critical pair} if all of the following five properties hold:
\begin{enumerate}
\item Agent $v \in V(H)$ buys two distinct non-bridge edges, say $(v,v_1)$ and $(v,v_2)$, with $v_1,v_2 \in V(H)$;
\item Agent $u \in V(H)$, with $u\neq v$ buys at least one edge $(u,u')$ with $u' \in V(H)$ and $u'\neq v$;
\item $d_G(v,u) \geq 2$;
\item there is a shortest path between $v$ and $u$ in $G$ which uses the edge $(v,v_1)$;
\item there is a shortest path between $v$ and  $u'$ in $G$ which does not use the edge $(u,u')$.
\end{enumerate}
The critical pair $\langle v, u\rangle$ is {\em strong} if there is a shortest path between $u$ and $v_2$ which does not use the edge $(v,v_2)$. See Fig.~\ref{fig:critical_pair} for an illustration.
\end{definition}
\begin{figure}[!htb]
 \centering
 \includegraphics[width=\textwidth]{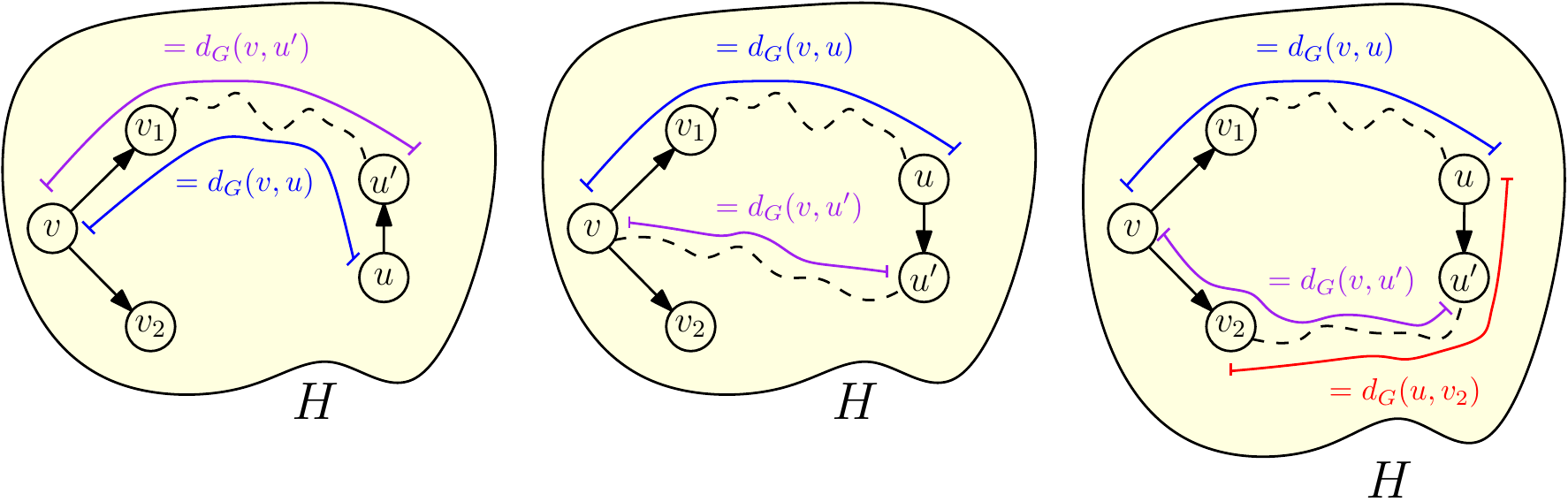}
 \caption{Illustrations of a critical pair $\langle v,u\rangle$. Edge-ownership is depicted by directing edges away from their owner. \textbf{Left:} Edge $(u,u')$ belongs to a shortest path tree $T$ rooted at $v$ and $u'$ is the parent of $u$ in $T$. \textbf{Middle:} Edge $(u,u')$ does not belong to any shortest path tree $T$ rooted at $v$. Note that in this case $(v,v_2)$ can also be on the shortest path from $v$ to $u'$. \textbf{Right:} Illustration of a strong critical pair $\langle v,u\rangle$.}
 \label{fig:critical_pair}
\end{figure}

\noindent In the rest of this section, when we say that two vertices $v$ and $u$ of $G$ form a critical pair, we will denote by $v_1,v_2$, and $u'$ the vertices corresponding to the critical pair $\langle v,u\rangle$ that satisfy all the conditions given in Definition~\ref{definition:apx_critical_pair}. We can observe the following.

\begin{observation}\label{remark:u}
If $\langle v, u\rangle$ is a critical pair of a network $(G,\alpha)$, then there exists a shortest path tree $T$ of $G$ rooted at $v$, where either the edge $(u,u')$ is not an edge of $T$ or $u'$ is the parent of $u$ in $T$.
\end{observation}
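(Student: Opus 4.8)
The plan is to classify the situation according to the relationship between the two distances $d_G(v,u)$ and $d_G(v,u')$. Since $(u,u')$ is an edge of $G$, these two quantities differ by at most $1$, so exactly one of the following holds: $d_G(v,u') = d_G(v,u)-1$, $d_G(v,u') = d_G(v,u)$, or $d_G(v,u') = d_G(v,u)+1$. In each case I would exhibit a shortest path tree $T$ rooted at $v$ with the required property, exploiting the elementary fact that a shortest path tree rooted at $v$ may be built by letting each vertex $x \neq v$ independently select, as its parent, any neighbor $y$ of $x$ with $d_G(v,y) = d_G(v,x)-1$; such local choices never conflict with one another.

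First consider the case $d_G(v,u') = d_G(v,u)-1$. Here $u'$ is a neighbor of $u$ lying one step closer to $v$, so $u'$ is an admissible parent for $u$. I would therefore build any shortest path tree $T$ in which $u$ selects $u'$ as its parent; then $u'$ is the parent of $u$ in $T$, which is the second alternative of the claim. Next consider the case $d_G(v,u') = d_G(v,u)$. Now $u$ and $u'$ are equidistant from $v$, so neither can be the parent of the other in any shortest path tree rooted at $v$, since a parent is always strictly closer to the root. Consequently the edge $(u,u')$ lies in no shortest path tree rooted at $v$, and the first alternative of the claim holds for every such $T$.

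The remaining and crucial case is $d_G(v,u') = d_G(v,u)+1$, and this is where Property~5 of Definition~\ref{definition:apx_critical_pair} is needed. Since $u'$ is strictly farther from $v$ than $u$, the edge $(u,u')$ can belong to $T$ only if $u$ is the parent of $u'$. Property~5 supplies a shortest path from $v$ to $u'$ that avoids the edge $(u,u')$; its last vertex before $u'$ is some neighbor $w \neq u$ of $u'$ with $d_G(v,w) = d_G(v,u')-1$. I would let $u'$ select this $w$ as its parent, guaranteeing that $u$ is not the parent of $u'$ and hence that $(u,u')$ is not an edge of $T$.

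The main point to get right is precisely this last case. One must argue that avoiding the \emph{specific edge} $(u,u')$, rather than avoiding the vertex $u$ altogether, already forces a predecessor of $u'$ different from $u$, which is exactly what Property~5 delivers and which the earlier two cases do not require. The rest of the argument is the routine observation that per-vertex parent choices assemble into a genuine shortest path tree without interference.
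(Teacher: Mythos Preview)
Your argument is correct. The case split on $d_G(v,u')-d_G(v,u)\in\{-1,0,1\}$ is the natural way to see this, and your handling of the third case via Property~5 is exactly the point: the avoided edge forces the penultimate vertex $w$ on the guaranteed shortest $v$--$u'$ path to be a neighbor of $u'$ distinct from $u$ at the correct depth, so it can serve as the parent of $u'$.

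The paper itself offers no proof of this statement; it is recorded as an Observation and left to the reader as an immediate consequence of Definition~\ref{definition:apx_critical_pair}. Your write-up simply makes explicit what the authors regard as routine, so there is no methodological divergence to discuss.
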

\begin{observation}\label{remark:v}
If $\langle v, u\rangle$ is a critical pair, then for every shortest path tree $T$ of $(G,\alpha)$ rooted at $u$, either the edge $(v,v_1)$ is not an edge of $T$ or $v_1$ is the parent of $v$ in $T$. Furthermore, if $\langle v,u\rangle$ is a strong critical pair, then there is a shortest path tree of $G$ rooted at $u$ such that the edge $(v,v_2)$ is not contained in the shortest path tree. 
\end{observation}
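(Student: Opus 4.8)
The plan is to prove Observation~\ref{remark:v} by exploiting the defining properties of a critical pair together with the owner-direction of edges, treating the two claims separately. For the first claim I would fix an arbitrary shortest path tree $T$ of $G$ rooted at $u$ and argue about how the edge $(v,v_1)$ can appear in $T$. The key input is property~4 of Definition~\ref{definition:apx_critical_pair}: there is a shortest path between $v$ and $u$ using the edge $(v,v_1)$, which means that $d_G(u,v) = d_G(u,v_1) + 1$, so $v_1$ is strictly closer to $u$ than $v$ is. Consequently, in \emph{any} shortest path tree rooted at $u$, if the edge $(v,v_1)$ is present then it must be oriented from $v$ toward its parent $v_1$ (a child is always farther from the root than its parent), which is exactly the assertion that $v_1$ is the parent of $v$; otherwise the edge is simply absent from $T$. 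This dichotomy gives the first statement.

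For the second claim, about strong critical pairs, I would use the additional strong-pair hypothesis: there is a shortest path between $u$ and $v_2$ that does not use the edge $(v,v_2)$. I want to produce a shortest path tree rooted at $u$ that omits the edge $(v,v_2)$. The natural approach is to build such a tree greedily by a breadth-first/Dijkstra-style construction from $u$, but at each vertex choosing the parent pointer according to the shortest paths we are promised to exist. In particular, I would choose the tree so that the path it assigns to $v_2$ is the guaranteed shortest $u$--$v_2$ path avoiding $(v,v_2)$; since $v_2$'s parent edge in the tree then lies on that path and is therefore not $(v,v_2)$, the edge $(v,v_2)$ cannot be the tree edge connecting $v_2$ to its parent. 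The only other way $(v,v_2)$ could appear in the tree is as the parent edge of $v$ (with $v_2$ the parent), but I can argue this does not force the edge into the tree: because $v$ also has the option of reaching $u$ via $v_1$ (by property~4, $d_G(u,v)=d_G(u,v_1)+1$), I can always route $v$ through $v_1$ instead of $v_2$, so a shortest path tree rooted at $u$ exists in which $v$'s parent edge is $(v,v_1)$ rather than $(v,v_2)$, leaving $(v,v_2)$ out of the tree entirely.

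The main obstacle I anticipate is the simultaneity of the two routing requirements in the strong case: I must avoid $(v,v_2)$ as the parent edge of $v_2$ \emph{and} as the parent edge of $v$, and these two demands concern the same edge viewed from its two endpoints. The cleanest way to handle this is to observe that a shortest path tree can be assembled vertex by vertex in nondecreasing order of distance from $u$, and that the freedom to choose, for each vertex independently, any incoming shortest-path edge as its parent edge lets me realize both avoidances at once, provided the chosen edges are mutually consistent (which they are, since each is drawn from a genuine shortest path to $u$). I would therefore phrase the argument as: pick the parent of $v_2$ along the promised avoiding path, pick the parent of $v$ to be $v_1$ using property~4, and complete the remaining parent choices arbitrarily among shortest-path predecessors; the resulting tree is a valid shortest path tree rooted at $u$ that contains neither orientation of $(v,v_2)$ as a tree edge, which is what we need.
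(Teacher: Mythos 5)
Your proposal is correct. The paper states this as an Observation without any explicit proof, and your argument supplies exactly the intended justification: property~4 of Definition~\ref{definition:apx_critical_pair} forces $d_G(u,v_1)=d_G(u,v)-1$, which settles the first claim in any shortest path tree rooted at $u$, and the standard per-vertex parent-choice construction of a shortest path tree (parent of $v$ set to $v_1$ via property~4, parent of $v_2$ set to its predecessor on the shortest $u$--$v_2$ path avoiding $(v,v_2)$ guaranteed by the strong-pair condition) yields the tree required for the second claim.
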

\noindent The next technical lemma provides useful bounds on the distance cost of the nodes involved in a critical pair.

\begin{lemma}\label{lemma:edge_pointing_towards_the_root_or_not_in_the_shortest_path}
Let $(G,\alpha)$ be a stable network and let $a,b$ be two distinct vertices of $G$ such that $a$ buys an edge $(a,a')$, with $a'\neq b$. If $d_G(a,b)\geq 2$ and there exists a shortest path tree $T$ of $G$ rooted at $b$ such that either $(a,a')$ is not an edge of $T$ or $a'$ is the parent of $a$ in $T$, then $\cost(a)\leq \cost(b)+n-3$. 
Furthermore, if $a$ is buying also the edge $(a,a'')$, with $a''\neq a'$, $a'' \neq b$, and $(a,a'')$ is not an edge of $T$, then $\cost(a) \leq \cost(b)+n-3-\alpha$.
\end{lemma}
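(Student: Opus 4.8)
The plan is to exhibit, for agent $a$, a strategy deviation whose resulting distance cost is easy to bound against $\cost(b)$, and then to invoke the stability of $(G,\alpha)$ to transfer that bound to the actual distance cost $\cost(a)$. Concretely, for the first statement I would consider the \emph{swap} in which $a$ deletes the edge $(a,a')$ and instead buys the edge $(a,b)$; this deviation is well-defined since $d_G(a,b)\ge 2$ guarantees that $(a,b)$ is not already present, and we are given $a'\neq b$ and $a\neq b$. Because a swap leaves $|S_a|$ unchanged, the creation cost of $a$ is unaffected, so stability yields $\cost(a)\le\cost(G',a)$, where $G'$ denotes the network after the swap. It then suffices to prove $\cost(G',a)\le\cost(b)+n-3$.

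To bound $\cost(G',a)$ I would work with the given shortest path tree $T$ rooted at $b$ and split the vertices $w$ according to whether $a$ is an ancestor of $w$ in $T$. The crucial observation is that, under either hypothesis on $T$ (namely $(a,a')\notin T$, or $a'$ is the parent of $a$ in $T$), the edge $(a,a')$ does not lie in the subtree $T_a$ of $T$ rooted at $a$; hence deleting $(a,a')$ leaves $T_a$ intact. For a vertex $w$ of which $a$ is \emph{not} an ancestor, the $b$–$w$ path in $T$ avoids $a$ and therefore avoids every edge incident to $a$, so it survives in $G'$ and gives $d_{G'}(a,w)\le 1+d_G(b,w)$ via the new edge $(a,b)$. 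For a descendant $w$ of $a$, the surviving subtree $T_a$ already gives $d_{G'}(a,w)\le d_T(a,w)=d_G(b,w)-d_G(b,a)\le d_G(b,w)-2$, which is even smaller. In either case $d_{G'}(a,w)\le 1+d_G(b,w)$, and summing over the $n-1$ vertices $w\neq a$ (using $\sum_{w\neq a}d_G(b,w)=\cost(b)-d_G(b,a)$) together with $d_G(a,b)\ge 2$ yields $\cost(G',a)\le (n-1)+\cost(b)-d_G(a,b)\le \cost(b)+n-3$.

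For the \emph{furthermore} statement I would instead perform the stronger deviation in which $a$ deletes both $(a,a')$ and $(a,a'')$ and buys the single edge $(a,b)$. Since $(a,a'')\notin T$ by hypothesis, this edge too is absent from $T_a$, so the identical case analysis applies verbatim and gives the same distance bound $\cost(G',a)\le\cost(b)+n-3$. Now, however, the deviation removes two owned edges and adds one, so the creation cost drops by exactly $\alpha$; stability therefore reads $\alpha|S_a|+\cost(a)\le\alpha(|S_a|-1)+\cost(G',a)$, which rearranges to $\cost(a)\le\cost(G',a)-\alpha\le\cost(b)+n-3-\alpha$, as claimed.

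The main obstacle is controlling how the deletion of $(a,a')$ (and, in the second part, of $(a,a'')$) perturbs the distances from $a$, since in general removing an edge can increase distances arbitrarily or even disconnect the graph. This is precisely where the hypothesis on $T$ does the work: it guarantees that the deleted edge(s) are not internal to the subtree rooted at $a$, so every descendant of $a$ remains reachable from $a$ at unchanged cost, while the freshly bought edge $(a,b)$ routes everyone else through $b$. Verifying that "being an ancestor of $w$ in $T$" is exactly the right dichotomy, and that the subtree distances are preserved after deletion, is the only delicate point; the remaining estimate is routine arithmetic. (A minor bonus is that the computed bound is finite, which in particular re-certifies that $G'$ is connected, so the stability comparison is valid.)
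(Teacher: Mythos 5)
Your proof is correct and takes essentially the same route as the paper's: the identical swap deviation (delete $(a,a')$, buy $(a,b)$, additionally delete $(a,a'')$ for the second claim), with stability transferring the resulting distance bound back to $\cost(a)$. Your ancestor/descendant case analysis in $T$ is simply an explicit justification of the step the paper states more tersely, namely that distances from $b$ do not increase after the swap, so the two arguments coincide in substance.
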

\begin{proof}
Consider the strategy change in which agent $a$ swaps the edge $(a,a')$ with the edge $(a,b)$ and deletes any other edge she owns and which is not contained in $T$, if any. 
Let $T'$ be a shortest path tree rooted at $b$ of the graph obtained after the swap. 
Observe that $d_{T'}(b,x)\leq d_{G}(b,x)$, for every $x \in V$. Furthermore, as $d_G(a,b)\geq 2$, while $d_{T'}(a,b)=1$, we have $d_{T'}(a,b)\leq d_G(a,b)-1$. Therefore, $\sum_{x \in V}d_{T'}(b,x)\leq \cost(b)-1$. Moreover, the distance from $a$ to every $x\neq a$ is at most $1+d_{T'}(b,x)$. Finally, the distance from $a$ to herself, which is clearly 0, is exactly 1 less than the distance from $b$ to $a$ in $T'$. Therefore the distance cost of $a$ in $T'$ is less than or equal to $\cost(b)-1+(n-1)-1=\cost(b)+n-3$. 

If besides performing the mentioned swap agent $a$ additionally saves at least $\alpha$ in cost by deleting at least one additional edge which is not in $T$, then $\cost(a)\leq \cost(b)+n-3-\alpha$. This is true since $G$ is stable, which implies that the overall cost of $a$ in $G$ cannot be larger than the overall cost of $a$ after the strategy change.
\end{proof}
\noindent Now we employ Lemma~\ref{lemma:edge_pointing_towards_the_root_or_not_in_the_shortest_path} to prove the structural property that stable networks cannot contain strong critical pairs if $\alpha$ is large enough.

\begin{lemma}\label{lemma:no_strong_critical_pair}
For $\alpha > 2n-6$, no stable network $(G,\alpha)$ contains a strong critical pair.
\end{lemma}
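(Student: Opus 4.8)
The plan is to apply Lemma~\ref{lemma:edge_pointing_towards_the_root_or_not_in_the_shortest_path} in both directions across the critical pair $\langle v,u\rangle$, obtaining two inequalities relating $\cost(v)$ and $\cost(u)$ whose combination forces $\alpha\le 2n-6$. Assume for contradiction that a stable network $(G,\alpha)$ with $\alpha>2n-6$ contains a strong critical pair $\langle v,u\rangle$, and recall the associated vertices $v_1,v_2,u'$ from Definition~\ref{definition:apx_critical_pair}. Since $v_1,v_2$ are neighbours of $v$ while $d_G(v,u)\ge 2$, we have $v_1\neq u$ and $v_2\neq u$; likewise $u'\neq v$ by Property~2.

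First I would bound $\cost(u)$ in terms of $\cost(v)$. Setting $a=u$, $a'=u'$, $b=v$, Property~2 supplies the bought edge $(u,u')$ with $u'\neq v$, Property~3 gives $d_G(u,v)\ge 2$, and Observation~\ref{remark:u} produces a shortest path tree $T$ rooted at $v$ in which either $(u,u')$ is absent or $u'$ is the parent of $u$. The hypotheses of the first part of Lemma~\ref{lemma:edge_pointing_towards_the_root_or_not_in_the_shortest_path} are thus met, yielding $\cost(u)\le \cost(v)+n-3$.

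Next I would bound $\cost(v)$ in terms of $\cost(u)$, and here strongness is essential because I want to invoke the ``furthermore'' part of Lemma~\ref{lemma:edge_pointing_towards_the_root_or_not_in_the_shortest_path} to gain an extra $-\alpha$. Set $a=v$, $a'=v_1$, $a''=v_2$, $b=u$. By Observation~\ref{remark:v}, every shortest path tree rooted at $u$ already has $v_1$ as the parent of $v$ or omits $(v,v_1)$; moreover, because $\langle v,u\rangle$ is strong, Observation~\ref{remark:v} guarantees one such tree $T$ that in addition does not contain $(v,v_2)$. With $v_1\neq u$, $v_2\neq u$, $v_2\neq v_1$, and $d_G(v,u)\ge 2$, the second conclusion of the lemma applies to this single tree $T$, giving $\cost(v)\le \cost(u)+n-3-\alpha$.

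Combining the two inequalities gives $\cost(v)\le \cost(v)+2n-6-\alpha$, hence $\alpha\le 2n-6$, contradicting $\alpha>2n-6$. The only real subtlety — the step I would treat most carefully — is the second application: one must exhibit a \emph{single} shortest path tree rooted at $u$ that simultaneously controls both edges $(v,v_1)$ and $(v,v_2)$, and it is precisely the strong property, as packaged in Observation~\ref{remark:v}, that furnishes such a tree so that the additional edge $(v,v_2)$ can be deleted to save the extra $\alpha$.
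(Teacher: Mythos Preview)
Your proof is correct and follows exactly the same approach as the paper: apply Lemma~\ref{lemma:edge_pointing_towards_the_root_or_not_in_the_shortest_path} once with $(a,a',b)=(u,u',v)$ via Observation~\ref{remark:u}, once with $(a,a',a'',b)=(v,v_1,v_2,u)$ via Observation~\ref{remark:v}, and add the resulting inequalities. Your extra care in noting that a single shortest path tree rooted at $u$ simultaneously handles $(v,v_1)$ (automatic by the first part of Observation~\ref{remark:v}) and omits $(v,v_2)$ (from strongness) is a welcome clarification of a point the paper leaves implicit.
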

\begin{proof}
Let $(G,\alpha)$ be a non-tree stable network for a fixed value of $\alpha > 2n-6$ and, for the sake of contradiction, let $\langle v,u  \rangle$ be a strong critical pair. Using Observation~\ref{remark:u} together with Lemma~\ref{lemma:edge_pointing_towards_the_root_or_not_in_the_shortest_path} (where $a=u, a'=u'$, and $b=v$), we have that 
\begin{equation*}
\cost(u)\leq \cost(v)+n-3.
\end{equation*}
Furthermore, using Observation~\ref{remark:v} together with Lemma~\ref{lemma:edge_pointing_towards_the_root_or_not_in_the_shortest_path} (where $a=v, a'=v_1, a''=v_2$, and $b=u$), we have that 
\begin{equation*}
\cost(v)\leq \cost(u)+n-3-\alpha.
\end{equation*}
By summing up both the left-hand and the right-hand side of the two inequalities we obtain $0\leq 2n-6-\alpha$, i.e., $\alpha\leq 2n-6$, a contradiction.
\end{proof}
\subsection{Min Cycles}
\noindent We now introduce the second key ingredient for our analysis: min cycles.

\begin{definition}[Min Cycle]
Let $(G,\alpha)$ be a non-tree network and let $C$ be a cycle in $G$. We say that $C$ is a min cycle if, for every two vertices $x,x' \in V(C)$, $d_C(x,x')=d_{G}(x,x')$.
\end{definition}

\noindent First, we show that every edge of every biconnected graph is contained in some min cycle. This was also proven in~\cite{L14} and \cite{AM17}.

\begin{lemma}\label{lemma:existence_of_min_cycle}
Let $H$ be a biconnected graph. Then, for every edge $e$ of $H$, there is a min cycle that contains the edge $e$.
\end{lemma}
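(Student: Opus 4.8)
The plan is to fix an edge $e=(x,y)$ of the biconnected graph $H$ and exhibit a min cycle through it. Since $H$ is biconnected, there exist two vertex-disjoint paths between $x$ and $y$; equivalently, deleting the edge $e$ leaves $x$ and $y$ in the same connected component, so there is a cycle through $e$. Among all cycles of $H$ that contain the edge $e$, I would select one, call it $C$, of \emph{minimum length}. The claim will be that this shortest cycle through $e$ is automatically a min cycle, i.e.\ that $d_C(p,q)=d_H(p,q)$ for every pair $p,q\in V(C)$.

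First I would record the easy direction: since $C$ is a subgraph of $H$, any path in $C$ is also a path in $H$, so $d_H(p,q)\le d_C(p,q)$ always holds. The work is in the reverse inequality. Suppose for contradiction that some pair $p,q\in V(C)$ satisfies $d_H(p,q) < d_C(p,q)$. The two arcs of the cycle $C$ split $C$ into two internally vertex-disjoint $p$--$q$ paths, say $P_1$ and $P_2$, whose lengths sum to $|C|$, and $d_C(p,q)=\min\{|P_1|,|P_2|\}$. I would let $Q$ be a shortest $p$--$q$ path in $H$, so $|Q|=d_H(p,q)<d_C(p,q)\le |P_i|$ for both $i$.

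The main obstacle, and the crux of the argument, is to turn the shortcut $Q$ into a \emph{shorter cycle still containing $e$}, contradicting minimality of $C$. The edge $e$ lies on exactly one of the two arcs, say $e\subseteq P_1$. The natural candidate is the closed walk $Q\cup P_1$ (replacing the arc $P_2$ by the shortcut $Q$): its length is $|P_1|+|Q| < |P_1|+|P_2| = |C|$, and it still contains $e$. The difficulty is that $Q$ may intersect $P_1$ in vertices other than $p,q$, so $Q\cup P_1$ is only a closed walk, not necessarily a simple cycle. To handle this I would argue that $Q\cup P_1$ nevertheless contains a simple cycle through $e$: traverse $P_1$ from $x$ to $y$ (the endpoints of $e$) and follow the portion of $P_1$ and the shortcut $Q$ so as to extract a simple cycle on which $e$ still lies, using that the combined edge set has fewer than $|C|$ edges. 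Since any simple cycle embedded in a walk of length $<|C|$ containing $e$ would itself have length $<|C|$ and contain $e$, this contradicts the minimality of $C$. The care needed to extract the simple cycle while guaranteeing $e$ survives is exactly where one must be precise; once that is established, the contradiction is immediate and the lemma follows.
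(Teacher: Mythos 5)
Your opening matches the paper's: let $C$ be a shortest cycle through $e$, suppose some pair $p,q\in V(C)$ violates $d_H(p,q)=d_C(p,q)$, and try to build a shorter cycle through $e$ from the shortcut $Q$. But the step you explicitly defer --- extracting from the closed walk $Q\cup P_1$ a \emph{simple} cycle that still contains $e$ --- is the entire content of the lemma, and the principle you invoke for it is false as stated. Nothing in your setup prevents the shortcut $Q$ from using the edge $e$ itself, and in that case $Q\cup P_1$ may contain \emph{no} simple cycle through $e$ at all. Concretely, if $P_1 = p,a,x,y,b,q$ with $e=(x,y)$ and $Q = p,x,y,q$, then $Q\cup P_1$ is two triangles, on $\{p,a,x\}$ and on $\{y,b,q\}$, joined by the edge $e$; here $e$ is a bridge of $Q\cup P_1$, so every simple cycle of this graph avoids $e$. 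Thus ``the combined edge set has fewer than $|C|$ edges'' cannot, on its own, deliver a shorter simple cycle through $e$: a short closed walk through $e$ does not in general contain a short simple cycle through $e$.

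The gap can be closed in two ways. (i) Repair your route with a case split. If $e\notin Q$, delete $e$ from $Q\cup P_1$ and note that $x$ and $y$ remain connected (via $P_1$ from $x$ back to $p$, then $Q$, then $P_1$ from $q$ back to $y$); a simple $x$--$y$ path avoiding $e$ closed up with $e$ is a simple cycle through $e$ of length at most $|P_1|+|Q|$, and $|Q|=d_H(p,q)<d_C(p,q)\le |P_2|$ gives length $<|C|$. If $e\in Q$, run the identical argument with $P_2$ (which does not contain $e$) in place of $P_1$, getting length at most $|P_2|+|Q|<|P_2|+|P_1|=|C|$. (ii) Do what the paper does: among all violating pairs, pick one, $(x',y')$, minimizing $d_H(x',y')$. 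This minimality forces the shortest path $\Pi$ to meet $C$ only at its endpoints: if an interior vertex $z$ of $\Pi$ lay on $C$, then by minimality the sub-pairs satisfy $d_C(x',z)=d_H(x',z)$ and $d_C(z,y')=d_H(z,y')$, whence $d_C(x',y')\le d_H(x',z)+d_H(z,y')=d_H(x',y')$, contradicting that $(x',y')$ is violating. Then $\Pi$ concatenated with the arc of $C$ containing $e$ is automatically a simple cycle through $e$ of length $<|C|$, with no extraction argument needed. Either fix completes your proof; as written, it is missing its crucial step.
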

\begin{proof}
Since $H$ is biconnected, there exists at least a cycle containing the edge $e$. 
Among all the cycles in $H$ that contain the edge $e$, let $C$ be a cycle of minimum length. We claim that $C$ is a min cycle. For the sake of contradiction, assume that $C$ is not a min cycle. This implies that there are two vertices $x,y \in V(C)$ such that $d_H(x,y) < d_C(x,y)$. Among all pairs $x,y \in V(C)$ of vertices such that $d_H(x,y)<d_C(x,y)$, let $x',y'$ be the one that minimizes the value $d_H(x',y')$ (ties are broken arbitrarily). Let $\Pi$ be a shortest path between $x'$ and $y'$ in $G$. By the choice of $x'$ and $y'$, $\Pi$ is edge disjoint from $C$.
\begin{figure}[h]
 \centering
 \includegraphics[width=6cm]{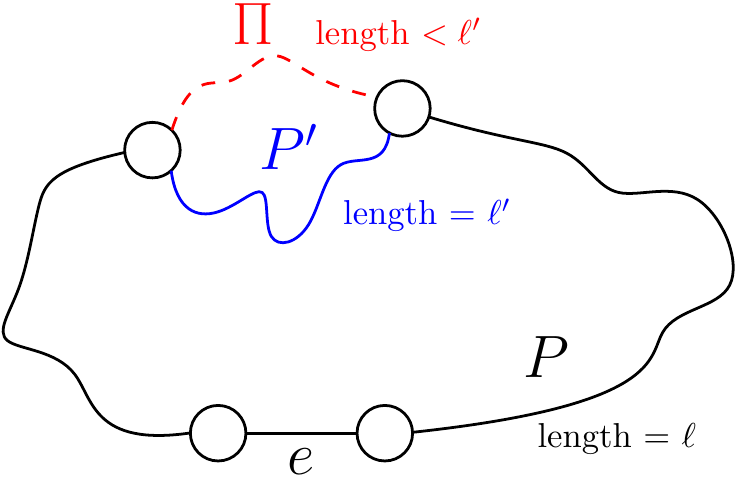}
 \caption{The cycle $C$ containing edge $e$ and the paths $P,P'$ and $\Pi$.}
 \label{fig:min_cycle}
\end{figure}
Let $P$ and $P'$ be the two edge-disjoint paths between $x'$ and $y'$ in $C$ and, w.l.o.g., assume that $e$ is contained in $P$. Let $\ell$ and $\ell'$ be the length of $P$ and $P'$, respectively. See Fig.~\ref{fig:min_cycle}. Clearly, the length of $C$ is equal to $\ell + \ell'$. Since $d_C(x',y') \leq \ell'$, we obtain $d_H(x',y') < \ell'$. Therefore, the cycle obtained by concatenating $P$ and $\Pi$ has a length equal to $\ell+d_H(x',y')< \ell+\ell'$, and therefore, it is strictly shorter than $C$, a contradiction.
\end{proof}

\noindent Now we proceed with showing that stable networks contain only min cycles which are directed and not too short. For this, we employ our knowledge about strong critical pairs.

\begin{lemma}\label{lm:every_min_cycle_is_directed_and_has_length_at_least_5}
For $\alpha > 2n-6$, every min cycle of a non-tree stable network $(G,\alpha)$ with $n\geq 4$ vertices is directed and has a length of at least 5.
\end{lemma}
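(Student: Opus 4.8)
The plan is to treat the two assertions separately. The length bound is immediate from the earlier cycle-exclusion lemmas: for $n\geq 4$ and $\alpha>2n-6$ we have $\alpha>2n-6\geq n-2\geq\frac{n-1}{2}$, so both Lemma~\ref{lm:no_3_cycle} and Lemma~\ref{lm:no_4_cycle} apply and forbid any cycle of length $3$ or $4$ in $(G,\alpha)$. Since a min cycle is in particular a cycle, every min cycle has length at least $5$. The substance of the proof is therefore the directedness claim.

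For directedness I would first record an elementary orientation fact. Orient each edge of a cycle $C$ of length $k$ away from its unique owner. Because $C$ has $k$ edges and $k$ vertices and each edge is owned by exactly one of its two endpoints, if every vertex owns exactly one of its two incident cycle edges then all out-degrees equal $1$, and a cyclic orientation with all out-degrees $1$ is a consistent directed cycle; hence $C$ is directed in the sense of Definition~\ref{def:directed_cycle}. Conversely, if $C$ is not directed, then some vertex $v$ must own \emph{both} of its incident cycle edges $(v,v_1)$ and $(v,v_2)$. These edges lie on $C$, hence are non-bridge edges with $v_1,v_2\in V(H)$, so $v$ already satisfies the first requirement of a critical pair in Definition~\ref{definition:apx_critical_pair}.

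It then remains to complete $v$ to a \emph{strong} critical pair, which contradicts Lemma~\ref{lemma:no_strong_critical_pair} and forces $C$ to be directed. Writing $C=x_0,x_1,\dots,x_{k-1}$ with $x_0=v$, $x_1=v_1$, $x_{k-1}=v_2$, I would pick the partner $u$ among the vertices antipodal to $v$, i.e.\ at cyclic distance $\lfloor k/2\rfloor$ or $\lceil k/2\rceil$ from $v$: concretely, I would take a cycle edge incident to an antipodal vertex, let $u$ be its owner and $u'$ the endpoint lying toward the antipode, and then fix the labelling of the two neighbours of $v$ as $v_1,v_2$ according to which arc of $C$ contains $u$. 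The crucial leverage is that $C$ is a min cycle, so $d_G$ restricted to cycle vertices equals the cyclic distance $d_C$; this reduces every remaining condition to a comparison of arc lengths. Condition~$3$ holds because the antipodal distance is at least $2$ once $k\geq 5$; condition~$4$ holds because the arc from $v$ to $u$ through $v_1$ realises $d_G(v,u)$; condition~$5$ holds because the complementary arc from $v$ to $u'$ avoids $(u,u')$ and is shortest; and the strong condition holds because the arc from $u$ to $v_2$ that bypasses $v$, and hence the edge $(v,v_2)$, is shortest.

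The main obstacle is precisely the choice of $u$ together with the orientation of the labelling, which genuinely splits into cases according to the parity of $k$ and the ownership pattern near the antipode. The delicate case is $k$ even, where the unique antipode $x_{k/2}$ may own neither of its incident cycle edges; there I would instead let $u$ be the neighbour of the antipode that owns the edge pointing toward it, and swap the roles of $v_1$ and $v_2$ accordingly. In each case one must verify that the specific incident edge owned by $u$ forces condition~$5$ through the correct arc; carrying out this book-keeping uniformly, using only the min-cycle identity $d_G=d_C$, is the technical heart of the argument, after which Lemma~\ref{lemma:no_strong_critical_pair} delivers the contradiction.
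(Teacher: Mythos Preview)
Your proposal is correct and follows essentially the same route as the paper: the length bound is dispatched via Lemmas~\ref{lm:no_3_cycle} and~\ref{lm:no_4_cycle}, and for directedness you locate a vertex $v$ owning both cycle edges, then pair it with a vertex at (or next to) the antipode to produce a strong critical pair, contradicting Lemma~\ref{lemma:no_strong_critical_pair}. The paper carries out exactly this case split on the parity of $k$ and on the ownership of the edge(s) incident to the antipode; your explicit remark that the labelling of $v_1,v_2$ must be chosen according to which arc contains $u$ is a point the paper leaves implicit.
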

\begin{proof}
Let $(G,\alpha)$ be a non-tree stable network for a fixed $\alpha > 2n-6$ and let $C$ be a min cycle of $G$. 
Since $2n-6\geq \frac{n-1}{2}$ for every $n\geq 4$, using Lemma~\ref{lm:no_3_cycle}, we have that $C$ cannot be a cycle of length equal to 3. Furthermore, Since $2n-6\geq n-2$ for every $n\geq 4$, using Lemma~\ref{lm:no_4_cycle}, we have that $C$ cannot be either a cycle of length equal to 4. Therefore, $C$ is a cycle of length greater than or equal to 5.

For the sake of contradiction, assume that $C$ is not directed. This means that $C$ contains a agent, say $v$, that is buying both her incident edges in $C$. We prove the contradiction thanks to Lemma~\ref{lemma:no_strong_critical_pair}, by showing that $C$ contains a strong critical pair.

If $C$ is an odd-length cycle, then $v$ has two distinct antipodal vertices $u, u' \in V(C)$ which are also adjacent in $C$.\footnote{In a cycle of length $\ell$, two vertices of the cycles are {\em antipodal} if their distance is equal to $\lfloor \ell /2 \rfloor$.} W.l.o.g., assume that $u$ is buying the edge towards $u'$. Clearly, $d_G(v,u)\geq 2$. Furthermore, since $C$ is a min cycle, it is easy to check that $\langle v, u \rangle$ is a strong critical pair.

If $C$ is an even-length cycle, then let $u \in V(C)$ be the (unique) antipodal vertex of $v$  and let $u'$ be a vertex that is adjacent to $u$ in $C$. Observe that $d_G(v,u),d_G(v,u')\geq 2$. Again using the fact that $C$ is a min cycle, we have the following:
\begin{itemize}
\item If $u$ is buying the edge towards $u'$, then $\langle v, u \rangle$ is a strong critical pair. 
\item If $u'$ is buying the edge towards $u$, then $\langle v, u' \rangle$ is a strong critical pair.
\end{itemize}
In both cases, we have proved that $C$ contains a strong critical pair. 
\end{proof}
\noindent Let $(G,\alpha)$ be a non-tree stable network with $n\geq 6$ vertices for a fixed $\alpha > 2n-6$ and let $H$ be a biconnected component of $G$. Since $2n-6\geq n-2$ for every $n\geq 4$, Lemma~\ref{lm:no_directed_cycle_as_biconnected_component} implies that $H$ cannot be a directed cycle. At the same time, if $H$ is a cycle, then it is also a min cycle and therefore, Lemma~\ref{lm:every_min_cycle_is_directed_and_has_length_at_least_5} implies that $H$ must be directed, which contradicts Lemma~\ref{lm:no_directed_cycle_as_biconnected_component}. Therefore, we have proved the following.

\begin{corollary}\label{cor:no_cycle_as_biconnected_component}
For $\alpha > 2n-6$, no non-tree stable network $(G,\alpha)$ with $n\geq 6$ vertices contains a cycle as one of its biconnected components.
\end{corollary}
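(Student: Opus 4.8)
The plan is to derive a contradiction by assuming that some biconnected component $H$ of a non-tree stable network $(G,\alpha)$ with $n\geq 6$ and $\alpha > 2n-6$ is itself a cycle, and then showing that this simultaneously forces $H$ to be directed (by the min-cycle machinery) and forbids it from being directed (by the dedicated lemma). First I would observe that whenever a biconnected component $H$ equals a cycle, that cycle is automatically a \emph{min cycle}: since $H$ is a maximal biconnected induced subgraph and it contains no chords (a chord would create a shorter cycle and the component would not be a bare cycle), every pair of its vertices has its $H$-distance realized along the cycle, and crucially no shortest path in $G$ can use vertices outside $H$ to shortcut between two vertices of $H$, because $H$ being a biconnected component means its vertices connect to the rest of $G$ only through cut-vertices. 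Hence $d_C(x,x')=d_G(x,x')$ for all $x,x'\in V(C)$, which is exactly the definition of a min cycle.

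Once $H$ is known to be a min cycle, I would invoke Lemma~\ref{lm:every_min_cycle_is_directed_and_has_length_at_least_5}, valid for $\alpha>2n-6$ and $n\geq 4$, to conclude that $H$ must be \emph{directed} and of length at least $5$. On the other hand, Lemma~\ref{lm:no_directed_cycle_as_biconnected_component}, valid for $\alpha>n-2$ and $n\geq 6$, states that no stable network contains a biconnected component that is a directed cycle. The two conclusions are flatly contradictory, so the assumption that $H$ is a cycle is untenable. To make the hypotheses line up, I would record the elementary inequality $2n-6\geq n-2$ for every $n\geq 4$, so that the threshold $\alpha>2n-6$ comfortably implies $\alpha>n-2$ and both preceding lemmas apply in the stated regime $n\geq 6$.

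The only genuinely substantive step is the first one, justifying that a biconnected-component cycle is a min cycle; the rest is a direct clash of two previously established lemmas. The subtlety in that first step is ruling out a $G$-shortcut between two cycle vertices that leaves $H$: this is where maximality of the biconnected component is essential, since if such a shortcut existed it would form, together with a portion of the cycle, a larger biconnected subgraph, contradicting that $H$ is a \emph{maximal} biconnected induced subgraph. With that established, no external detour can beat the on-cycle distance, and neither can a chord (there are none, as $H$ is exactly a cycle), so the min-cycle condition holds. I therefore expect the write-up to be short: state $H$ is a min cycle, apply Lemma~\ref{lm:every_min_cycle_is_directed_and_has_length_at_least_5} to force directedness, and contradict Lemma~\ref{lm:no_directed_cycle_as_biconnected_component}.
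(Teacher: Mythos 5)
Your proposal is correct and matches the paper's own argument essentially verbatim: the paper also observes that a biconnected component which is a cycle must be a min cycle, then derives the contradiction between Lemma~\ref{lm:every_min_cycle_is_directed_and_has_length_at_least_5} (forcing directedness) and Lemma~\ref{lm:no_directed_cycle_as_biconnected_component} (forbidding it), using $2n-6\geq n-2$ to align the thresholds. Your only addition is an explicit justification of the ``cycle component implies min cycle'' step, which the paper asserts without proof; your justification via maximality of the biconnected component is sound.
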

\subsection{Combining the Ingredients}
\noindent Towards our main result, we start with proving that every stable network must contain a critical pair which satisfies an interesting structural property. This lemma is the third and last ingredient that is used in our analysis.

\begin{lemma}\label{lemma:cool_path}
For $\alpha > 2n - 6$, every non-tree stable network $(G,\alpha)$ with $n\geq 6$ vertices contains a critical pair $\langle v, u \rangle$. Furthermore, there exists a path $P$ between $v$ and $v_2$ in $G$ such that (a) the length of $P$ is at most $2d_G(u,v)$ and (b) $P$ uses none of the edges $(v,v_1)$ and $(v,v_2)$. 
\end{lemma}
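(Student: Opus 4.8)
The plan is to construct the required critical pair explicitly, by locating a single vertex $v$ that buys two edges of a biconnected component and then reading off $u$ and the path $P$ from two min cycles emanating from $v$.

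First I would find $v$. Since $(G,\alpha)$ is a non-tree network it has a biconnected component $H$, and by Corollary~\ref{cor:no_cycle_as_biconnected_component} this $H$ is not a cycle, so $|E(H)|\ge |V(H)|+1$. Every edge of $H$ is non-bridge and is owned by one of its two endpoints, both of which lie in $V(H)$; thus the ownership map sends at least $|V(H)|+1$ edges into $|V(H)|$ vertices, and by pigeonhole some $v\in V(H)$ buys two edges of $H$. Using Lemma~\ref{lemma:existence_of_min_cycle} I fix, for each edge bought by $v$, a min cycle through it, which by Lemma~\ref{lm:every_min_cycle_is_directed_and_has_length_at_least_5} is directed and has length at least $5$; since each such min cycle contains an edge of $H$ it lies entirely inside $H$, so all incident vertices are in $V(H)$. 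I then label the neighbours so that $(v,v_1)$ carries the \emph{longest} of these min cycles, call it $C$ of length $k$, and let $(v,v_2)$ be any other edge bought by $v$, with min cycle $C_2$ of length $k_2\le k$.

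Next I would read off $u$ from the directed cycle $C=(c_0=v,\,c_1=v_1,\,c_2,\dots,c_{k-1})$, in which each $c_i$ buys $(c_i,c_{i+1})$, by taking $u=c_{\lfloor k/2\rfloor}$ (the antipodal vertex on the $v_1$-side) and $u'=c_{\lfloor k/2\rfloor+1}$. All five conditions of Definition~\ref{definition:apx_critical_pair} should then follow from the min-cycle identity $d_G(c_i,c_j)=d_C(c_i,c_j)$: condition~3 since $d_G(v,u)=\lfloor k/2\rfloor\ge 2$; condition~4 since the arc of $C$ from $v$ to $u$ through $v_1$ is a shortest path using $(v,v_1)$; and condition~5 since the arc of $C$ from $v$ to $u'$ through the other neighbour of $v$ is a shortest path avoiding $(u,u')$. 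Directedness is used here precisely to ensure that $u$ buys the edge towards $u'=c_{\lfloor k/2\rfloor+1}\ne v$.

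Finally, for the path I would take $P=C_2\setminus\{(v,v_2)\}$, a $v$–$v_2$ path of length $k_2-1$. Since $2\,d_G(v,u)=2\lfloor k/2\rfloor\ge k-1\ge k_2-1$, requirement~(a) holds, which is exactly why $(v,v_1)$ was chosen to carry the longest min cycle. Requirement~(b) is the delicate point and the main obstacle: $P$ avoids $(v,v_2)$ trivially, so the work is to show $v_1\notin V(C_2)$ (forcing $P$ to avoid $(v,v_1)$). I expect to settle this by combining directedness with the no-chord property of min cycles: if $v_1$ were adjacent to $v$ on $C_2$ then $v$ would buy two edges of the directed cycle $C_2$, which is impossible, and if $v_1$ lay on $C_2$ non-adjacently then $(v,v_1)$ would be a chord, contradicting that $C_2$ is a min cycle. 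Hence $v_1\notin V(C_2)$, so $P$ avoids $(v,v_1)$ and the proof is complete; the same reasoning incidentally yields $v_2\notin V(C)$, keeping the configuration around $v$ clean.
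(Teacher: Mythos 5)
Your proof is correct, and at its key step it takes a genuinely different (and simpler) route than the paper. Both arguments share the same skeleton: the pigeonhole argument producing a vertex $v$ that buys two edges of the non-cycle biconnected component, the use of Lemmas~\ref{lemma:existence_of_min_cycle} and~\ref{lm:every_min_cycle_is_directed_and_has_length_at_least_5} to obtain directed min cycles of length at least $5$ through $(v,v_1)$ and $(v,v_2)$, the choice of $P$ as $C_2$ minus the edge $(v,v_2)$, and the observation that directedness of $C_2$ forbids it from containing the second edge $(v,v_1)$ bought by $v$ --- which is exactly the paper's argument for requirement~(b); your additional chord argument showing $v_1\notin V(C_2)$ is valid but not needed. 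The divergence is in how $u$ is found and how bound~(a) is proved. The paper builds a shortest path tree rooted at $v$ with tie-breaking priority for paths through $(v,v_1)$ and $(v,v_2)$, defines $u_i$ as the farthest (in directed cycle order) descendant of $v_i$ lying on $C_i$, argues $d_G(v,u_i')\leq d_G(v,u_i)$ from that construction, and needs the w.l.o.g. comparison $d_G(v,u_2)\leq d_G(v,u_1)$ to push through the chain $d_G(v,u_2')+1+d_G(v_2,u_2)\leq 2d_G(v,u)$. You avoid all of this by labeling $(v,v_1)$ as the edge carrying the \emph{longest} of the chosen min cycles and taking $u$ antipodal to $v$ on it: the five conditions of Definition~\ref{definition:apx_critical_pair} then follow directly from the min-cycle identity $d_G=d_C$ together with directedness (which guarantees $u$ buys the edge to the next cycle vertex $u'\neq v$), and bound~(a) collapses to the arithmetic $k_2-1\leq k-1\leq 2\lfloor k/2\rfloor=2d_G(u,v)$. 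Your route buys the elimination of the shortest-path-tree machinery and of the paper's somewhat delicate justification that $u_i'$ cannot be a descendant of $v_i$; the paper's route yields a choice of $u$ tied to the global metric of $G$ rather than to an arbitrary selection of one min cycle per edge, but since Theorem~\ref{thm_tree} only consumes the statement of the lemma, both constructions serve equally well downstream.
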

\begin{proof}
Let $(G,\alpha)$ be a network of $n\geq 6$ vertices which is stable for a fixed $\alpha > 2n-6$, and let $H$ be any biconnected component of $G$. By Corollary~\ref{cor:no_cycle_as_biconnected_component}, we have that $H$ cannot be a cycle. As a consequence, $H$ contains at least $|V(H)|+1$ edges and, therefore, it has a vertex, say $v$, that buys at least two edges of $H$.

Let $v_1$ and $v_2$ be the two distinct vertices of $H$ such that $v$ buys the edges $(v,v_1)$ and $(v,v_2)$. Let $C_i$ be the min cycle that contains the edge $(v,v_i)$, whose existence is guaranteed by Lemma~\ref{lemma:existence_of_min_cycle}. Lemma~\ref{lm:every_min_cycle_is_directed_and_has_length_at_least_5} implies that $C_i$ is a directed cycle of length greater than or equal to 5. Therefore, since $(v,v_1)$ is an edge of $C_1$ bought by agent $v$, $C_1$ cannot contain the edge $(v,v_2)$, which is also bought by $v$. Similarly, since $(v,v_2)$ is an edge of $C_2$ bought by agent $v$, $C_2$ cannot contain the edge $(v,v_1)$, which is also bought by $v$.

Let $T$ be a shortest path tree rooted at $v$ which gives priority to the shortest paths using the edges $(v,v_1)$ or $(v,v_2)$. More precisely, for every vertex $x$, if there is a shortest path from $v$ to $x$ containing the edge $(v,v_1)$, then $x$ is a descendant of $v_1$ in $T$. Furthermore, if no shortest path from $v$ to $x$ contains the edge $(v,v_1)$, but there is a shortest path from $v$ to $x$ containing the edge $(v,v_2)$, then $x$ is a descendant of $v_2$ in $T$.

Consider the directed version of $C_i$ in which each edge is directed from their owner agent towards the other end vertex. 
Let $u_i$ be, among the vertices of $C_i$ which are also descendants of $v_i$ in $T$, the one which is in maximum distance from $v$ w.r.t. the directed version of $C_i$. 
Finally, let $(u_i,u_i')$ be the edge of $C_i$ which is bought by agent $u_i$. Clearly, $u_i'$ is not a descendant of $v_i$ in $T$. Therefore, by construction of $T$, $d_G(v,u_i')\leq d_G(v,u_i)$, otherwise $u_i'$ would have been a descendant of $v_i$ in $T$, or there would have been a min cycle containing both edges $(v,v_1)$ and $(v,v_2)$ (which are both bought by agent $v$), thus contradicting Lemma~\ref{lm:every_min_cycle_is_directed_and_has_length_at_least_5}.

W.l.o.g., assume that $d_G(v,u_2)\leq d_G(v,u_1)$. Let $u=u_1$ and $u'=u_1'$.
We show that $\langle v, u \rangle$ is a critical pair. By Lemma~\ref{lm:every_min_cycle_is_directed_and_has_length_at_least_5}, $C_1$ is a cycle of length $k\geq 5$. As $C_1$ is a min cycle, $k=d_G(v,u)+1+ d_G(v,u')$. Moreover, since $d_G(v,u')\leq d_G(v,u)$, we have that $d_G(v,u) \geq \frac{k-1}{2}\geq 2$. Therefore $u'\neq v$. Next, the shortest path in $T$ between $v$ and $u$ uses the edge $(v,v_1)$ which is owned by agent $v$. Furthermore, the shortest path in $T$ between $v$ and $u'$ does not use the edge $(u,u')$. Therefore, $\langle v, u \rangle$ is a critical pair.

Now, consider the path $P$ which is obtained from $C_2$ by removing the edge $(v,v_2)$. Recalling that $C_2$ does not contain the edge $(v,v_1)$, it follows that $P$ is a path between $v$ and $v_2$ which uses none of the two edges $(v,v_1)$ and $(v,v_2)$. Therefore, recalling that $d_G(v,u_2')\leq d_G(v,u_2)$, the overall length of $P$ is less than or equal to
\begin{align*}
d_G(v,u_2')+1+d_G(v_2,u_2) & \leq d_G(v,u_2)+1+d_G(v,u_2)-1 
							 \leq 2d_G(v,u_1)=2d_G(v,u).\qedhere
\end{align*}
\end{proof}
\noindent Finally, we prove our main result. For this and in the rest of the paper, given a vertex $x$ of a network $(G,\alpha)$ and a subset $U$ of vertices of $G$, we denote by $d_{G}(x,U) := \sum_{x' \in U}d_{G}(x,x')$. 

\begin{theorem}\label{thm_tree}
For $\alpha > 4n-13$, every stable network $(G,\alpha)$ with $n\geq 4$ vertices is a tree.
\end{theorem}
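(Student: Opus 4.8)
The plan is to argue by contradiction: suppose $(G,\alpha)$ is a stable non-tree network with $n\geq 4$ and $\alpha>4n-13$. Since $4n-13\geq 2n-6\geq n-2\geq\frac{n-1}{2}$ for $n\geq 4$, the small cases $n\in\{4,5\}$ can be disposed of directly: a non-tree network on at most $5$ vertices has a biconnected component that is a single cycle of length $3$, $4$, or $5$, and these are excluded by Lemmas~\ref{lm:no_3_cycle} and~\ref{lm:no_4_cycle} together with the length-$5$ computation carried out inside the proof of Lemma~\ref{lm:no_directed_cycle_as_biconnected_component}. So I would assume $n\geq 6$ and invoke Lemma~\ref{lemma:cool_path} to obtain a critical pair $\langle v,u\rangle$ and a path $P$ from $v$ to $v_2$ of length at most $2d$, where $d:=d_G(v,u)$, avoiding both edges $(v,v_1)$ and $(v,v_2)$. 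Because $\alpha>4n-13\geq 2n-6$, Lemma~\ref{lemma:no_strong_critical_pair} forbids this pair from being strong; unfolding the definition, \emph{every} shortest $u$-$v_2$ path must use the edge $(v,v_2)$, so $d_G(u,v_2)=d+1$ and $v$ is the parent of $v_2$ in every shortest path tree rooted at $u$.

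One inequality comes essentially for free. Applying Observation~\ref{remark:u} and the first part of Lemma~\ref{lemma:edge_pointing_towards_the_root_or_not_in_the_shortest_path} with $a=u$, $a'=u'$, $b=v$ gives
\begin{equation*}
\cost(u)\leq\cost(v)+n-3.
\end{equation*}
The heart of the argument is a matching reverse inequality that also captures a saving of $\alpha$. For this I would let $v$ deviate by swapping $(v,v_1)$ for $(v,u)$ while simultaneously deleting $(v,v_2)$, so that the creation cost drops by exactly $\alpha$. I would fix a shortest path tree $T$ rooted at $u$ in which $v_1$ is the parent of $v$ (Observation~\ref{remark:v}) and, by non-strongness, $v$ is the parent of $v_2$. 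After the deviation, $v$ reaches every vertex outside its own subtree through the new edge $(v,u)$, reaches the part of its subtree avoiding $v_2$ along $T$, and reaches the subtree hanging below $v_2$ by first traversing $P$ to $v_2$ and then descending in $T$. Bounding these three groups and using stability should yield
\begin{equation*}
\cost(v)\leq\cost(u)+(3n-10)-\alpha,
\end{equation*}
and summing the two displayed inequalities collapses to $0\leq 4n-13-\alpha$, the desired contradiction.

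The main obstacle is the detour cost hidden in this second inequality. Routing the vertices below $v_2$ along $P$ increases each of their distances from $v$ by as much as $|P|-1\leq 2d-1$, and a naive count bounds the total increase by (number of displaced vertices)$\,\times 2d$, which is quadratic rather than linear in $n$. Controlling this term is the crux. Here I would exploit three facts in combination: that every vertex below $v_2$ lies at distance at least $d+1$ from $u$, so these vertices are vertex-disjoint from the $d+1$ vertices of a shortest $u$-$v$ path; that $|P|\leq 2d$ ties the detour length back to $d$; and that only the vertices whose \emph{every} shortest path from $u$ passes through $(v,v_2)$ are genuinely displaced, while the remaining ones retain an alternative shortest route and are still reached cheaply through $u$. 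Making this accounting produce the clean constant $3n-10=3(n-3)-1$ — and thereby confirming that it is precisely the factor $2$ in $|P|\leq 2d$ that inflates the $2n-6$ of Lemma~\ref{lemma:no_strong_critical_pair} to the $4(n-3)-1=4n-13$ claimed here — is the step I expect to demand the most care.
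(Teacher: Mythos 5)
Your skeleton matches the paper's proof: dispose of $n\in\{4,5\}$, take the critical pair $\langle v,u\rangle$ and the path $P$ from Lemma~\ref{lemma:cool_path}, use Lemma~\ref{lemma:no_strong_critical_pair} to force every shortest $u$--$v_2$ path through $(v,v_2)$, get $\cost(u)\leq\cost(v)+n-3$ from Lemma~\ref{lemma:edge_pointing_towards_the_root_or_not_in_the_shortest_path}, and finish with the swap-plus-delete deviation of $v$. But the proposal has a genuine gap exactly where you admit it does: you never derive the second inequality $\cost(v)\leq\cost(u)+3n-10-\alpha$, and the three facts you propose do not suffice to control the detour term. Writing $d=d_G(u,v)$ and letting $X$ be the set of vertices reached from $u$ only through $(v,v_2)$, the swap-plus-delete accounting leaves an error term of order $(d-2)|X|$ on top of $\cost(u)$. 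Your first fact gives only $|X|\leq n-d-1$, so the term can still be as large as $(d-2)(n-d-1)\approx n^2/4$ when $d\approx n/2$; your second and third facts refine constants but do not change this. No purely structural argument can kill this term, because ruling out the configuration ``$d$ large and $|X|$ large simultaneously'' genuinely requires another appeal to stability.

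The ingredient you are missing is a \emph{third} deviation, which is the heart of the paper's proof: let $v$ swap $(v,v_1)$ for $(v,u)$ \emph{without} deleting $(v,v_2)$. Stability under this swap-only move gives $\cost(v)\leq d_G(v,X)+n-|X|+d_G(u,V\setminus X)-2$. Separately, since every shortest path from $u$ into $X$ passes through $v$, the first inequality can be rewritten as $d|X|+d_G(v,X)+d_G(u,V\setminus X)\leq\cost(v)+n-3$. Summing these two cancels all the distance-cost terms and leaves the linear bound $d\,|X|\leq 2n-5-|X|$ (inequality~(\ref{equation:nice_inequality}) in the paper). This is precisely the linearization of the product $d\,|X|$ that your plan needs: substituting it into the swap-plus-delete accounting bounds the detour term by $2n-8$, after which your claimed $\cost(v)\leq\cost(u)+3n-10-\alpha$ (in fact $3n-11-\alpha$) follows and the summation closes the argument. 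A minor further remark: for $n=5$ your reduction to ``the biconnected component is a $5$-cycle'' needs Lemmas~\ref{lm:no_3_cycle} and~\ref{lm:no_4_cycle} to exclude chords, and the deletion computation you cite presupposes the cycle is directed, which requires Lemma~\ref{lm:every_min_cycle_is_directed_and_has_length_at_least_5} (or a separate argument for an agent owning two cycle edges); the paper sidesteps this by checking the small cases directly.
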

\begin{proof}
 First of all, it is easy to check that for $\alpha > 3$ every stable network with $n=4$ vertices is a tree. Moreover, the same holds true for $n=5$ for $\alpha > 7$. 
 
 Let $\alpha >4n-13$ be a fixed value and let $(G,\alpha)$ be a stable network with $n\geq 6$ vertices. Since $4n-13\geq 2n-6$, for every $n\geq 4$, we have that if $(G,\alpha)$ is not a tree, then, by Lemma~\ref{lemma:cool_path}, it contains a critical pair $\langle v, u\rangle$ satisfying the conditions stated in Lemma~\ref{lemma:cool_path}. Moreover, Lemma~\ref{lemma:no_strong_critical_pair} implies that $\langle v, u\rangle$ cannot be a strong critical pair. As a consequence, every shortest path from $u$ to $v_2$ uses the edge $(v,v_2)$. Since $\langle v,u \rangle$ is a critical pair, this implies that there is a shortest path from $u$ to $v_2$ which uses both the edges $(v_1,v)$ and $(v,v_2)$. To finish our proof, we show that this contradicts the assumed stability of $(G,\alpha)$. This implies that $(G,\alpha)$ must be a tree.

Let $T(u)$ be a shortest path tree of $G$ rooted at $u$ having $v_1$ as the parent of $v$ and $v$ as the parent of $v_2$. Observe that, by definition of a critical pair, there is a shortest path between $v$ and $u$ containing the edge $(v,v_1)$. Therefore, $T(u)$ is well defined. Furthermore, let $X$ be the set of vertices which are descendants of $v_2$ in $T(u)$. Note that since $v_2 \in X$, we have $|X| \geq 1$.

Since $\langle v, u \rangle$ is a critical pair, thanks to Observation~\ref{remark:u}, we can use Lemma~\ref{lemma:edge_pointing_towards_the_root_or_not_in_the_shortest_path} (where $a=u, a'=u'$, and $b=v$) to obtain
\begin{equation}\label{equation:usage_cost_of_u}
\cost(u)\leq {\cost}(v)+n-3.
\end{equation}
Furthermore, observe that
\begin{align}\label{equation:alternative_usage_cost_of_u}
{\cost}(u) 	& = \sum_{x \in X}\big(d_G(u,v)+d_G(v,x)\big) + d_G(u,V\setminus X)\\
			& = d_G(u,v)|X| + d_G(v,X) + d_G(u,V\setminus X).\nonumber
\end{align}
Therefore, by substituting $\cost(u)$ in~(\ref{equation:usage_cost_of_u}) with~(\ref{equation:alternative_usage_cost_of_u}) we obtain the following
\begin{equation}\label{equation:delta_of_u}
d_G(u,v)|X| + d_G(v,X) + d_G(u,V\setminus X) \leq {\cost}(v) + n-3.
\end{equation}
Let $T'(u)$ be the tree obtained from $T(u)$ by the the swap of the edge $(v,v_1)$ with the edge $(v,u)$. The distance cost incurred by agent $v$ if she swaps the edge $(v,v_1)$ with the edge $(v,u)$ is at most 
\begin{align*}
d_{T'(u)}(v,V) 	& = d_{T'(u)}(v,X) + d_{T'(u)}(v,V\setminus X)
 = d_{T(u)}(v,X) + d_{T'(u)}(v,V\setminus X)\\
								& \leq  d_{T(u)}(v,X) + \sum_{x \in V\setminus (X\cup \{v\})}\big(1+d_{T(u)}(u,x)\big) \\ 
								& \leq d_{G}(v,X) + \sum_{x \in V\setminus X}\big(1+d_{G}(u,x)\big)-2\\ 
								& = d_G(v,X) + n-|X| + d_G(u,V\setminus X)-2.
\end{align*}
Since $(G,\alpha)$ is stable, agent $v$ cannot decrease her distance cost by swapping any of the edges she owns.
Therefore, we obtain
\begin{equation}\label{equation:delta_of_v}
{\cost}(v) \leq d_G(v,X) + n-|X| + d_G(u,V\setminus X)-2.
\end{equation}
By summing both the left-hand and the right-hand sides of the two inequalities~(\ref{equation:delta_of_u}) to~(\ref{equation:delta_of_v}) and simplifying we obtain
\begin{equation}\label{equation:nice_inequality}
d_G(u,v)|X| \leq 2n - 5 - |X|.
\end{equation}
Consider the network $(G',\alpha)$ induced by the strategy vector in which agent $v$ deviates from her current strategy by swapping the edge $(v,v_1)$ with the edge $(v,u)$ and, at the same time, by deleting the edge$(v,v_2)$.
By Lemma~\ref{lemma:cool_path}, there exists a path $P$ between $v$ and $v_2$ in $G$, of length at most $2d_G(u,v)$, such that  $P$ uses none of the edges $(v,v_1)$ and $(v,v_2)$. 
As a consequence, using both~(\ref{equation:usage_cost_of_u}) and~(\ref{equation:nice_inequality}) in the second to last inequality of the following chain, the distance cost of $v$ w.r.t. $(G',\alpha)$ is upper bounded by 
\begin{align*}
d_{G'}(v,V) 	& \leq \sum_{x \in X} \big(2d_G(u,v)+d_G(v_2,x)\big) + \sum_{x \in V\setminus (X\cup\{v\})}\big(1+d_G(u,x)\big)\\
				& \leq 2d_G(u,v)|X|+d_G(v_2,X)+n-|X|+d_G(u,V\setminus X)-2\\
				& \leq 2d_G(u,v)|X| + d_G(v,X) - |X|  + n-|X| + d_G(u,V\setminus X)-2\\
				& = 2d_G(u,v)|X| + d_G(u,X)-d_G(u,v)|X|+n-2|X|+d_G(u,V\setminus X)-2\\
				& = d_G(u,v)|X| + n - 2|X| + {\cost}(u)-2\\
				& \leq 2n-5-|X| + n - 2|X| + {\cost}(v) + n - 3-2\\
				& =  {\cost}(v) + 4n - 10 - 3|X|
				\leq {\cost}(v) + 4n -13.
\end{align*}
By her strategy change, agent $v$ will save $\alpha$ in edge cost and her distance cost will increase by at most $4n-13$. Thus, if $\alpha > 4n-13$, then this yields a strict cost decrease for agent $v$ which contradicts the stability of $(G,\alpha)$.
\end{proof}

\noindent With the results from Fabrikant et al.~\cite{Fab03} Theorem~\ref{thm_tree} yields:
\begin{corollary}
For $\alpha > 4n-13$ the PoA is at most $5$.
\end{corollary}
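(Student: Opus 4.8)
The plan is to obtain the corollary as a direct composition of two facts already available: the structural result of Theorem~\ref{thm_tree}, which settles the tree conjecture in this range of $\alpha$, and the classical price-of-anarchy bound for stable trees due to Fabrikant et al.~\cite{Fab03}. The guiding observation is that Theorem~\ref{thm_tree} has done all of the hard structural work, so that for $\alpha > 4n-13$ the family of NE networks coincides with the family of \emph{stable tree networks}; consequently the quantity $\mathrm{maxNE}_n$ appearing in the definition of the PoA is necessarily realized by a stable tree, on which the Fabrikant bound may be invoked.

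First I would recall from Fabrikant et al.~\cite{Fab03} that any stable network which happens to be a tree has social cost at most $5\cdot\mathrm{OPT}_n$. The key feature of this bound, which I would make explicit, is that it holds for an arbitrary tree NE and does not itself depend on the particular value of $\alpha$; hence it applies verbatim to each of the tree networks guaranteed by Theorem~\ref{thm_tree}.

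Next I would assemble the two ingredients. Fix $n\geq 4$ and $\alpha > 4n-13$. By Theorem~\ref{thm_tree}, every NE network on $n$ nodes is a tree, so $\mathrm{maxNE}_n$ is attained by some stable tree $(G,\alpha)$; by the cited bound its social cost is at most $5\cdot\mathrm{OPT}_n$. Therefore $\frac{\mathrm{maxNE}_n}{\mathrm{OPT}_n}\leq 5$ throughout this range of $\alpha$, and taking the maximum over the admissible $n$ delivers $\mathrm{PoA}\leq 5$.

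The single point requiring care—rather than a genuine obstacle—is reconciling the definition of the PoA as a maximum over $n$ with the $n$-dependence of the threshold $4n-13$: for a fixed $\alpha$ the inequality $\alpha > 4n-13$ singles out finitely many values of $n$, so the statement is naturally read as bounding $\frac{\mathrm{maxNE}_n}{\mathrm{OPT}_n}$ uniformly over all pairs $(n,\alpha)$ with $\alpha>4n-13$ and $n\geq 4$, the regime to which Theorem~\ref{thm_tree} applies; the remaining tiny instances are trivial. On every such pair the argument above yields the constant $5$, with no further estimation needed. I expect no difficult step, since both the tree-ness (Theorem~\ref{thm_tree}) and the constant PoA for trees (\cite{Fab03}) are already in hand, and the corollary is precisely their conjunction.
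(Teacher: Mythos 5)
Your proposal is correct and follows exactly the paper's (implicit) proof: the paper derives this corollary directly by combining Theorem~\ref{thm_tree} with the bound of $5$ on the PoA of stable tree networks from Fabrikant et al.~\cite{Fab03}. Your additional remark on reading the PoA definition uniformly over pairs $(n,\alpha)$ with $\alpha > 4n-13$ is a sensible clarification but does not change the argument.
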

\noindent In Section~\ref{sec_PoA} we improve the upper bound of $5$ on the PoA for stable tree networks from Fabrikant et al.~\cite{Fab03}. With this, we establish the following:
\begin{corollary}
 For every $\alpha > 4n-13$ the PoA is at most $3+\frac{2n}{2n+\alpha}$. 
\end{corollary}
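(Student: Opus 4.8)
The plan is to combine Theorem~\ref{thm_tree} with a self-contained analysis of the price of anarchy restricted to stable trees. Since $\alpha > 4n-13 \ge 3$ for $n \ge 4$, Theorem~\ref{thm_tree} guarantees that every stable network on $n\ge 4$ vertices is a tree, so $\mathrm{maxNE}_n$ is attained by a stable tree and it suffices to bound $\frac{\completecost(T,\alpha)}{\mathrm{OPT}_n}$ over all stable trees $T$. First I would pin down $\mathrm{OPT}_n$. The star $K_{1,n-1}$ has creation cost $\alpha(n-1)$ and distance cost $2(n-1)^2$, hence social cost $(n-1)(\alpha+2n-2)$. For the matching lower bound, any connected $G=(V,E)$ has $|E|\ge n-1$ and, counting by degrees, distance cost at least $2n(n-1)-2|E|$ (each vertex has at most its degree many vertices at distance $1$ and every other vertex at distance $\ge 2$); thus $\completecost(G,\alpha)\ge(\alpha-2)|E|+2n(n-1)$, which for $\alpha>2$ is minimized at $|E|=n-1$ and equals $(n-1)(\alpha+2n-2)$. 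Hence $\mathrm{OPT}_n=(n-1)(\alpha+2n-2)$.

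Next, writing the total distance cost of a stable tree $T$ as $D:=\sum_{u\in V}\cost(u)$, its social cost is $\alpha(n-1)+D$. A direct computation shows that the whole corollary reduces to the single inequality
\begin{equation*}
D\ \le\ (n-1)(2\alpha+8n-6)-\frac{4n(n-1)}{2n+\alpha},
\end{equation*}
because substituting it together with $\mathrm{OPT}_n=(n-1)(\alpha+2n-2)$ into $\frac{\alpha(n-1)+D}{\mathrm{OPT}_n}$ and simplifying yields exactly $3+\frac{2n}{2n+\alpha}$ (the residue $6(n-1)-\frac{6(n-1)(2n+\alpha)}{2n+\alpha}$ vanishes identically). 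So the entire task is to prove this upper bound on $D$.

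To bound $D$ I would reformulate it through edge separations: root $T$ at a vertex $c$ of minimum distance cost (a centroid, so every proper subtree has at most $n/2$ vertices), and for each edge $e$ let $s_e$ be the number of vertices on the far side of $e$. Then $D=2\sum_{e\in E(T)}s_e(n-s_e)$, i.e.\ twice the Wiener index. The two stability tools are: the \emph{swap} bound of Lemma~\ref{lemma:edge_pointing_towards_the_root_or_not_in_the_shortest_path}, which for every agent owning the edge toward its parent gives $\cost(a)\le\cost(c)+n-3$ at no extra cost; and a \emph{buy} bound, obtained by letting an arbitrary agent $v$ additionally purchase $(v,c)$, which by stability yields $\cost(v)\le\cost(c)+n-1+\alpha$. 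The swap bound handles the agents owning their parent edge, while the buy bound (charging an extra $\alpha$) covers the remaining agents, whose parent edges are owned from above; these latter agents correspond exactly to the downward-oriented edges, and the edge-separation viewpoint is what lets one cap their number and balance so that the $\alpha$-terms stay within the $2\alpha(n-1)$ budget.

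The hard part will be this last step: extracting the distance-cost bound with the \emph{exact} constant rather than a weaker $3+\mathcal{O}(1)$. A crude estimate of the minimum distance cost $\cost(c)$ via the radius is insufficient, since stability only forces the radius to be $\mathcal{O}(\sqrt{\alpha})$, giving $\cost(c)=\mathcal{O}(n\sqrt{\alpha})$ and overshooting; one must instead bound the Wiener sum $\sum_e s_e(n-s_e)$ directly, exploiting that a profitable swap re-attaches a subtree at the \emph{centroid} of the opposite side, so that highly balanced edges cannot chain along a root-to-leaf path without exposing a profitable buy for a far vertex. Finally, recovering the lower-order correction $-\frac{4n(n-1)}{2n+\alpha}$, which is precisely what upgrades a bound of shape $3+\frac{2n}{2n+\alpha}+o(1)$ to the claimed $3+\frac{2n}{2n+\alpha}$, requires carrying the factor $\frac{1}{2n+\alpha}$ through the accounting rather than discarding it, presumably by balancing the swap and buy contributions against how many edges point downward.
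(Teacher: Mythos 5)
Your setup is sound and, up to that point, matches the paper's architecture: Theorem~\ref{thm_tree} reduces the claim to stable trees, your identification of $\mathrm{OPT}_n$ with the star cost $(n-1)(\alpha+2n-2)$ is exactly the paper's, and your algebraic reduction of the corollary to the single inequality on $D=\sum_{u}\cost(u)$ is correct. But the proof stops where the actual work begins: the inequality on $D$ is never established, and you concede as much (``the hard part will be this last step''). The tools you name do not reach it. Both the swap bound of Lemma~\ref{lemma:edge_pointing_towards_the_root_or_not_in_the_shortest_path} and your buy bound $\cost(v)\le\cost(c)+n-1+\alpha$ compare each agent's distance cost to $\cost(c)$; summing them over all agents gives at best $D\le n\,\cost(c)+\alpha n+O(n^2)$, so you still need $\cost(c)=O(\alpha+n)$, i.e., a depth bound of order $\alpha/n$ from the centroid --- precisely the estimate you admit you cannot obtain, since a single vertex buying an edge to the center only forces radius $O(\sqrt{\alpha})$. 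That missing estimate is the entire content of the paper's Section~\ref{sec_PoA}, and your proposal offers no mechanism to produce it.

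The paper closes exactly this gap with two lemmas absent from your sketch. Lemma~\ref{lemma:centroid} shows that in a stable tree rooted at a centroid every bought edge points away from the root and, crucially, that each child subtree is again rooted at one of \emph{its own} centroids; iterating yields Lemma~\ref{lemma:lower_bound_vertices_subtrees}: along any root-to-leaf path the subtree sizes satisfy $\sum_{j\le i}n_j\ge n\,(1-2^{-i})$, i.e., they decay geometrically. Consequently a leaf at depth $r+2$ buying an edge to the centroid's child saves at least $(r-2)n$ in distance cost, so stability forces $r\le \alpha/n+2$, hence diameter at most $2\alpha/n+8$, and a trivial pair count gives $\completecost(T)\le \frac{8n^2+(3\alpha-14)n-4\alpha}{n}(n-1)$ (Theorem~\ref{theorem:PoA_stable_trees}). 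Your gesture that ``a profitable swap re-attaches a subtree at the centroid of the opposite side'' points toward Lemma~\ref{lemma:centroid}, but without the quantitative geometric-decay consequence it yields no bound on $D$. Note also that your concern about carrying the factor $\frac{1}{2n+\alpha}$ through the accounting to recover the exact correction term is misplaced: the paper proves the strictly stronger ratio $3+\frac{2n^2-8n-4\alpha}{2n^2+(\alpha-2)n}$, and the corollary's $3+\frac{2n}{2n+\alpha}$ follows by a one-line relaxation, since cross-multiplying shows $(2n^2-8n-4\alpha)(2n+\alpha)-2n\bigl(2n^2+(\alpha-2)n\bigr)=-12n^2-16\alpha n-4\alpha^2\le 0$. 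The lower-order term you isolated emerges automatically from the denominator of $\mathrm{OPT}_n$ once the diameter bound is in hand; no delicate balancing of swap versus buy contributions is required.
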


\section{Improved Price of Anarchy for Stable Tree Networks}\label{sec_PoA}

In this section we show a better bound on the PoA of stable tree networks. To prove the bound, we need to introduce some new notation first. Let $T$ be a tree on $n$ vertices and, for a vertex $v$ of $T$, let $T-v$ be the forest obtained by removing vertex $v$ together with all its incident edges from $T$. We say that $v$ is a {\em centroid} of $T$ if every tree in $T-v$ has at most $n/2$ vertices. It is well known that every tree has at least one centroid vertex~\cite{KH79}.

\begin{lemma}\label{lemma:centroid}
Let $(T,\alpha)$ be a stable tree network rooted at a centroid $c$ of $T$, and let $u,v \in V(T)$, with $u,v \neq c$, be two vertices such that $u$ buys the edge towards $v$ in $T$. Then $d_T(c,u) < d_T(c,v)$, i.e. $u$ is the parent of $v$ in $T$. Furthermore, if $\overline{T}$ denotes the subtree of $T$ rooted at $v$, then  $v$ is a centroid of~$\overline{T}$.
\end{lemma}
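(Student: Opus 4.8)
The plan is to prove both claims by exhibiting, in each case, a single edge-swap available to the owner of a bought edge and showing that stability forbids it unless the desired conclusion holds. Throughout I root $T$ at the centroid $c$ and, for a vertex $x \neq c$, write $T_x$ for the subtree of $T$ rooted at $x$. The only external facts I use are the stability of $(T,\alpha)$ — so no agent has a cost-decreasing swap — and the defining property of the centroid $c$, namely that every branch of $T-c$, and hence every subtree $T_x$ with $x \neq c$, contains at most $n/2$ vertices.

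For the first claim I argue by contradiction: suppose $v$ is the parent of $u$ (the ``wrong'' orientation), and let $p$ be the parent of $v$, which exists because $v \neq c$. Since $u$ owns the edge $(u,v)$, I consider the swap in which $u$ replaces $(u,v)$ by $(u,p)$; as $u$ and $p$ are at distance two in $T$ this is a genuine swap and it leaves the creation cost unchanged. Writing $A = T_u$ for the subtree below $u$, I would compute the change in $u$'s distance cost by classifying every target $x$: the vertices of $A$ are unaffected, the vertices of $T_v \setminus A$ each move one step farther, and all remaining vertices (those outside $T_v$) each move one step closer. This yields a net change of $2|T_v| - |A| - n$. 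Using $|T_v| \le n/2$ from the centroid property and $|A| \ge 1$ (since $u \in A$), this change is at most $-1$, so the swap strictly improves $u$'s cost, contradicting stability. Hence $u$ must be the parent of $v$, i.e.\ $d_T(c,u) < d_T(c,v)$.

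For the second claim, $u$ is now known to be the parent of $v$, so $\overline{T} = T_v$ is exactly the component obtained by deleting the edge $(u,v)$; setting $m := |\overline{T}|$, it suffices to bound every child-subtree of $v$. Let $w$ be any child of $v$ in $\overline{T}$ (equivalently in $T$). Again using that $u$ owns $(u,v)$, I consider the swap replacing $(u,v)$ by $(u,w)$, which is valid since $u$ and $w$ lie at distance two. This swap only re-attaches $\overline{T}$ to $u$ through $w$ instead of through $v$, so distances from $u$ to vertices outside $\overline{T}$ are unchanged, while for $x \in \overline{T}$ the change is $d_{\overline{T}}(w,x) - d_{\overline{T}}(v,x)$. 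Summing over $\overline{T}$ and using that deleting the edge $(v,w)$ splits $\overline{T}$ into the side of $v$ (of size $m - |T_w|$) and the side of $w$ (of size $|T_w|$), the net distance change is exactly $m - 2|T_w|$. Stability forces this to be nonnegative, giving $|T_w| \le m/2$. Since $w$ was an arbitrary child of $v$ and the components of $\overline{T} - v$ are precisely the subtrees $T_w$, this shows that $v$ is a centroid of $\overline{T}$.

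The main obstacle is the first claim: the swap there must be chosen carefully (toward $p$, not toward $c$), because swapping toward the centroid produces a distance change of indeterminate sign, whereas the grandparent swap isolates the clean quantity $2|T_v| - |A| - n$. The delicate point is extracting a \emph{strict} improvement from it — the centroid bound alone only gives $\le 0$, and it is the trivial observation $|A| \ge 1$ that pushes the change down to $\le -1$. The second claim is then routine: the same swap primitive, now aimed at a child $w$, produces the balanced expression $m - 2|T_w|$, whose nonnegativity is exactly the centroid inequality.
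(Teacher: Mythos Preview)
Your proof is correct. The second claim you handle essentially as the paper does: swap $(u,v)$ to $(u,w)$ for a child $w$ of $v$, observe that only distances inside $\overline{T}$ change, and read off $|T_w|\le m/2$ from stability.

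For the first claim, however, you take a genuinely different route. You swap $(u,v)$ to the \emph{grandparent} $p$ and obtain the clean change $2|T_v|-|A|-n\le -1$ directly from $|T_v|\le n/2$ and $|A|\ge 1$. The paper instead swaps $(u,v)$ all the way to the \emph{centroid} $c$: it partitions the component $T_v$ (the side of the deleted edge containing $v$ and $c$) into the vertices outside the branch of $T-c$ holding $u$ and $v$ (at least $n/2$ of them, each moving $d_T(v,c)$ closer) and the rest (each moving at most $d_T(v,c)$ farther), getting a net change at most $d_T(v,c)\bigl((n-1)-2\cdot n/2\bigr)\le -1$. So your parenthetical remark that the swap toward $c$ ``produces a distance change of indeterminate sign'' is not quite right --- the sign \emph{is} determinate, but extracting it requires this extra partition of $T_v$, which your local grandparent swap sidesteps. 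Your argument is shorter and uses only a distance-two swap; the paper's argument has the mild advantage that it never needs the grandparent to exist separately from $c$ (though since $v\neq c$ this is automatic anyway).
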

\begin{proof}
We show that $d_T(c,u) < d_T(c,v)$ by proving that if $d_T(c,u)>d_T(c,v)$, then $(T,\alpha)$ is not stable. So, assume that $d_T(c,u)>d_T(c,v)$. Consider the forest obtained from $T$ after the removal of the edge $(u,v)$ and let $T_v$ be the tree of the forest that contains vertex $v$. Since $v$ is closer to $c$ than $u$ in $T$ and $u$ is not a vertex of $T_v$, it follows that $c$ is a vertex of $T_v$. 
Consider the strategy change in which player $u$ swaps the edge $(u,v)$ with the edge $(u,c)$. Since $u$ and $v$ are both in the same tree, say $T''$, of the forest $T-c$, it follows that the tree induced by all the vertices of $T$ which are not contained in $T''$, say $T'$, is entirely contained in $T_v$ and has $n'\geq n/2$ vertices, as $c$ is a centroid of $T$. Observe that  after the swap of the edge $(u,v)$ with the edge $(u,c)$, the distance from each of the vertices in $T'$ decreases by $d_T(v,c)$, while the distance from each of all the other vertices of $T_v$ increases by at most $d_T(v,c)$. 
Therefore, if we denote by $n_v$ the number of vertices of $T_v$, then the usage cost of player $u$ increases by at most
\[
d_T(v,c)(n_v-n')-d_T(v,c)n'= d_T(v,c)(n_v-2n')\leq d_T(v,c)(n-1-2n/2) \leq -1.
\]
Therefore, $(T,\alpha)$ is not stable.

We now prove that $v$ is a centroid of $\overline{T}$. Let $\overline{V}$ be the set of vertices of $\overline{T}$. Observe that the claim trivially holds if $|\overline{V}|\leq 2$. Therefore, we assume that $|\overline{V}|\geq 3$. Notice that
\[
\cost(u) = d_T(u,\overline{V})+d_T(u,V\setminus \overline{V}) = |\overline{V}|+d_T(v,\overline{V})+d_T(u,V \setminus \overline{V}).
\]
Since $(T,\alpha)$ is stable, $d_T(v,\overline{V})=\min_{v' \in \overline{V}}d_T(v',\overline{V})$, otherwise $u$ would have incentive to change her strategy by swapping the edge $(u,v)$ with the edge $(u,v^*)$ such that $v^* \in \arg\min_{v' \in \overline{V}}d_T(v',\overline{V})$. 

Let $x_1,\dots,x_k$ be the $k$ neighbors of $v$ in $\overline{T}$. Clearly, $x_1,\dots,x_k$ are also the $k$ children of $v$ in $T$. Let $(v,x_i)$ be any edge of $\overline{T}$ adjacent to $v$. Consider the forest $F$ obtained by removing the edge $(v,x_i)$ from $\overline{T}$. Let $X_i$ be the set of vertices of the tree of $F$ that contains $x_i$. Let $Y_i=\overline{V}\setminus X_i$ be the set of vertices of the tree of $F$ that contains $v$. We have that 
\[
d_T(v, \overline{V}) = d_T(v,Y_i)+d_T(v,X_i)=d_T(v,Y_i)+|X_i|+d_T(x_i,X_i).
\]
Similarly,
\[
d_T(x_i,\overline{V})=d_T(x_i,Y_i)+d_T(x_i,X_i)=|Y_i|+d_T(v,Y_i)+d_T(x_i,X_i).
\]
Since $d_T(v,\overline{V})\leq d_T(x_i,\overline{V})$, it follows that
\[
d_T(v,Y_i)+|X_i|+d_T(x_i,X_i) \leq |Y_i|+d_T(v,Y_i)+d_T(x_i,X_i) \text{, i.e., } |X_i|\leq |Y_i|.
\]
Therefore, for every $i=1,\dots,k$, we have that $|X_i| \leq 1+\sum_{j=1,j\neq i}^{k}|X_j| = |\overline{V}|-|X_i|$, which implies that $|X_i| \leq |\overline{V}|/2$. Hence, $v$ is a centroid of $\overline{T}$.
\end{proof}
\noindent We now show a useful bound on the number of vertices contained in each of the subtrees of a stable tree network rooted at a centroid.

\begin{lemma}\label{lemma:lower_bound_vertices_subtrees}
Let $(T,\alpha)$ be a stable tree network rooted at a centroid $c$ of $T$, let $u$ be a child of $c$ in $T$ and let $v$ be a leaf of $T$ contained in the subtree of $T$ rooted at $u$. Let $c_1,\dots,c_k$ be the vertices along the path in $T$ between $c_0=u$ and $c_k=v$, where $c_{i+1}$ is the child of $c_i$, and, finally, for every $i=1,\dots,k$, let 
\[
n_i=\big|\big\{x \in V \mid d_T(c_i,x)<d_T(c_j,x), j\neq i\big\}\big|.
\]
We have that $\sum_{j=1}^{i}n_j \geq n\cdot\sum_{j=1}^i 1/{2^j}$.
\end{lemma}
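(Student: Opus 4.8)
The plan is to translate the quantity $\sum_{j=1}^i n_j$ into a statement about the sizes of the subtrees hanging along the path $c_0,\dots,c_k$, and then to control these sizes using the centroid structure provided by Lemma~\ref{lemma:centroid}. For $i=0,1,\dots,k$ let $\overline{T}_i$ denote the subtree of $T$ rooted at $c_i$ and write $m_i=|V(\overline{T}_i)|$, with the convention $m_{k+1}=0$. First I would record the two size estimates we need. For each $i\ge 1$ the edge $(c_{i-1},c_i)$ has both endpoints different from $c$ and is bought by one of them; by the first part of Lemma~\ref{lemma:centroid} the buyer is the parent $c_{i-1}$, and by the second part $c_i$ is a centroid of $\overline{T}_i$. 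Since $\overline{T}_{i+1}$ is one of the components of $\overline{T}_i-c_i$, this gives $m_{i+1}\le m_i/2$ and hence $m_{i+1}\le m_1/2^{\,i}$. Moreover, since $c_0=u$ is a child of the centroid $c$, the subtree $\overline{T}_0$ is exactly the component of $T-c$ containing $u$, so $m_1\le m_0\le n/2$. Combining these yields $m_{i+1}\le n/2^{\,i+1}\le n/2^{\,i}$.

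The main step is the identity $\sum_{j=1}^i n_j=n-m_{i+1}$. To see it, note that $c_0,\dots,c_k$ lie on a single root-to-leaf path, so for any $x\in V$ the function $j\mapsto d_T(c_j,x)$ is governed by the unique path vertex $c_{p(x)}$ closest to $x$ via $d_T(c_j,x)=|j-p(x)|+d_T(c_{p(x)},x)$. Consequently the unique minimiser of $d_T(c_j,x)$ over $j\in\{1,\dots,k\}$ is $\max\{p(x),1\}$, so each $x$ is counted in exactly one $n_j$ and the sets underlying the $n_j$ partition $V$. A vertex $x$ is therefore counted in $\sum_{j=1}^i n_j$ precisely when $p(x)\le i$, i.e. when $x\notin\overline{T}_{i+1}$, since $p(x)\ge i+1$ holds exactly for the descendants of $c_{i+1}$. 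This gives $\sum_{j=1}^i n_j=n-m_{i+1}$. The one point that needs care here is the boundary behaviour at the top of the path: every vertex outside $\overline{T}_0$, together with $u$ itself, has $p(x)=0$ and is absorbed into the cell of $c_1$, which is exactly why the sum can reach $n$ rather than being capped by $m_0\le n/2$. I expect this identity to be the only genuinely delicate part of the argument, the size estimates being immediate once Lemma~\ref{lemma:centroid} is in hand.

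Finally I would combine the two ingredients. Using $\sum_{j=1}^i n_j=n-m_{i+1}$ together with $m_{i+1}\le n/2^{\,i}$ gives $\sum_{j=1}^i n_j\ge n-n/2^{\,i}=n\bigl(1-2^{-i}\bigr)=n\sum_{j=1}^i 2^{-j}$, where the last equality is the geometric sum. The case $i=k$ is handled automatically by the convention $m_{k+1}=0$ (giving the exact total $n$), and the trivial small cases are covered by the same centroid bounds.
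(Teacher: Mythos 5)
Your proof is correct and rests on the same ingredients as the paper's: Lemma~\ref{lemma:centroid} applied along the path to conclude that each $c_i$ is a centroid of its subtree, and the identification of the cells $n_j$ with differences of consecutive subtrees. The paper merely packages this as an induction on $i$ --- its inductive step $n_{i+1}\ge\frac{n-m}{2}$ (with $m=\sum_{j=1}^{i}n_j$) implicitly contains your partition identity $\sum_{j=1}^{i}n_j=n-m_{i+1}$ --- whereas you unroll the induction into that explicit identity plus the geometric bound $m_{i+1}\le n/2^{i+1}$, so the two arguments are essentially the same.
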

\begin{proof}
The proof is by induction on $i$.

The base is when $i=1$. Observe that $\{x \in V \mid d_T(c,x)<d_T(c_j,x), j \in \{1,\dots,k\}\} \subset \{x \in V \mid d_T(c_1,x)<d_T(c_j,x), j \in \{2,\dots,k\}\}$. Therefore, since $c$ is a centroid, $n_1 \geq n/2$.

Now, assume that for every $j\leq i$, $n_j \geq n/2^j$. We prove the claim for $i+1$. Using Lemma~\ref{lemma:centroid}, we have that $c_{i+1}$ is a centroid of the subtree of $T$ rooted at $c_{i+1}$. As a consequence, if $m=\sum_{j=1}^{i} n_j$, we have that $n_{i+1} \geq \frac{n-m}{2}$.
By induction hypothesis, $m \geq n\cdot \sum_{j=1}^i 1/2^j$. Therefore,
\[
\sum_{j=1}^{i+1} n_j = m+n_{i+1} \geq m+\frac{n-m}{2}=\frac{m+n}{2}\geq n\cdot \sum_{j=1}^i 1/2^{j+1}+\frac{n}{2}=n\cdot \sum_{j=1}^{i+1}\frac{1}{2^j}. \qedhere
\]
\end{proof}

\noindent We can finally prove our upper bound on the PoA of stable tree networks.

\begin{theorem}\label{theorem:PoA_stable_trees}
For $\alpha \geq 2$, the PoA restricted to the class of stable tree networks of $n$ vertices is upper bounded by $3+\frac{2n^2-8n-4\alpha}{2n^2+(\alpha-2)n}$. 
\end{theorem}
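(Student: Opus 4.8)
The plan is to bound the social cost of a stable tree from above and divide by a lower bound on the optimum obtained from the star. For any connected graph on $n$ vertices with $m$ edges, exactly $2m$ of the $n(n-1)$ ordered pairs are at distance $1$ and the rest at distance at least $2$, so its social cost is at least $\alpha m + \big(2m + 2(n(n-1)-2m)\big) = (\alpha-2)m + 2n(n-1)$. As $\alpha \ge 2$ and connectivity forces $m \ge n-1$, this is minimized at $m=n-1$, whence $\mathrm{OPT} = (n-1)(2n+\alpha-2)$, attained by the star. It then remains to show that the distance cost $\sum_{u\neq v} d_T(u,v)$ of an arbitrary stable tree $(T,\alpha)$ is small, since its social cost is $\alpha(n-1) + \sum_{u\neq v} d_T(u,v)$.

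To this end I would root $T$ at a centroid $c$ and set $S := \cost(c) = \sum_v d_T(c,v)$. Stability bounds every other vertex against $S$: no $v \neq c$ profits from buying the edge $(v,c)$, and after that purchase $v$ reaches each $w$ within $1 + d_T(c,w)$, so $\cost(v) \le \alpha + (n-1) + S - d_T(c,v)$. Summing over $v \neq c$ gives $\sum_{u\neq v} d_T(u,v) \le (n-1)\big(S + \alpha + n - 1\big)$, so everything reduces to bounding $S$, i.e.\ to proving that a stable tree is shallow.

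The heart of the proof is this bound on $S$, and it is where Lemma~\ref{lemma:centroid} and Lemma~\ref{lemma:lower_bound_vertices_subtrees} enter. Fix a deepest leaf $v$ and let $c_0=u,c_1,\dots,c_k=v$ be the path from the centroid's child $u$ to $v$, so the depth of $T$ is $k+1$. I would analyse the deviation in which $v$ buys $(v,c)$ and lower bound the distance it saves. Because $T$ is a tree, a vertex $w$ whose nearest path vertex is $c_i$ saves at least $k-2i-2$, while each of the $\ge n/2$ vertices in the other subtrees of $c$ saves a full $k$. Weighting these per-cell savings by the cell sizes $n_i$ and summing by parts against the geometric estimates $\sum_{j\le i} n_j \ge n(1-2^{-i})$ of Lemma~\ref{lemma:lower_bound_vertices_subtrees}, the total saving is at least $nk - O(n)$. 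Stability caps it by $\alpha$, forcing $k \le \alpha/n + O(1)$; combined with the unconditional $k \le \log_2 n + O(1)$ (each step at least halves the hanging subtree, again by Lemma~\ref{lemma:lower_bound_vertices_subtrees}) and the crude $S \le (n-1)(k+1)$, this yields $S \le \alpha + O(n)$ in the moderate range of $\alpha$ and $S = O(n\log n)$ for large $\alpha$.

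Substituting the bound on $S$ into $\sum_{u\neq v} d_T(u,v) \le (n-1)(S+\alpha+n-1)$ and dividing by $\mathrm{OPT}=(n-1)(2n+\alpha-2)$ should give the stated ratio after simplification. I expect the main difficulty to be quantitative, not conceptual: the coefficient of $\alpha$ in the bound on $S$ must come out as (essentially) $1$. Using only the $\ge n/2$ far vertices, each saving about $k$, gives $\alpha \gtrsim (n/2)k$ and hence the wrong constant $2$; one really needs the geometrically growing contribution of the near cells, delivered by Lemma~\ref{lemma:lower_bound_vertices_subtrees}, to lift the total saving to $\approx nk$. Pushing these constants tightly enough to land exactly on $3 + \frac{2n^2-8n-4\alpha}{2n^2+(\alpha-2)n}$, and matching the moderate- and large-$\alpha$ estimates seamlessly, is where the real care lies.
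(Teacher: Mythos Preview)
Your overall plan is sound and would yield \emph{a} constant PoA bound that tends to~$3$, but it is a genuinely different route from the paper's, and your expectation that the algebra ``should give the stated ratio after simplification'' is not correct.

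The paper uses the depth bound in a more direct way: from Lemma~\ref{lemma:lower_bound_vertices_subtrees} and the leaf's deviation (buying an edge to $c_1$, the child of the centroid, not to $c$) it obtains that the distance from $c$ to any leaf is at most $\alpha/n+4$, hence the diameter $D\le 2\alpha/n+8$. It then bounds the total distance cost crudely by $2(n-1)+D(n-1)(n-2)$ (the $2(n-1)$ accounting for the $n-1$ adjacent pairs), adds $\alpha(n-1)$, and divides by $\mathrm{OPT}=(n-1)(2n+\alpha-2)$. The very specific numbers $-8n$ and $-4\alpha$ in the numerator of the stated bound come from exactly this accounting; they are artefacts of the constant~$8$ in the diameter bound and of the $(n-2)$ factor.

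Your route instead bounds $S=\cost(c)$ by $(n-1)\cdot\text{depth}$ and then propagates via $\cost(v)\le \alpha+(n-1)+S-d_T(c,v)$. Summing and adding the edge cost gives a social-cost bound of the form $(n-1)\bigl(\alpha(3n-1)/n+O(n)\bigr)$, whereas the paper's is $(n-1)\bigl(\alpha(3n-4)/n+8n+O(1)\bigr)$. Both have leading coefficient~$3$ in the ratio, but the lower-order terms differ: your formula is actually \emph{better} for $\alpha\lesssim n^2$ and slightly \emph{worse} for very large~$\alpha$ (you get $3-1/n$ in the limit, the paper gets $3-4/n$). So no amount of care in the constants will make your expression collapse to $3+\tfrac{2n^2-8n-4\alpha}{2n^2+(\alpha-2)n}$; the two accountings are structurally different. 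If you want the exact stated bound, switch to the diameter-times-pairs bookkeeping after the depth estimate; if you keep your route, you will prove a different (and in the interesting range $\alpha=\Theta(n)$ slightly stronger) inequality. The aside about the $\log_2 n$ depth cap is correct but unnecessary for the theorem as stated.
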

\begin{proof}
Let $(T,\alpha)$ be a stable tree network rooted at a centroid $c$ of $T$. Let $c'$ be any child of $c$ in $T$ and let $v$ be any leaf contained in the subtree of $T$ rooted at $c'$. Let $c_1,\dots,c_{r+2}$ be the vertices along the path in $T$ between $c_1=c'$ and $c_{r+2}=v$, where $c_{i+1}$ is the child of $c_i$, and, finally, for every $i=1,\dots,r+2$, let 
\[
n_i=\big|\big\{x \in V \mid d_T(c_i,x)<d_T(c_j,x), j\neq i\big\}\big|.
\]
Consider the strategy change in which player $v$ buys the edge $(v,c_1)$ and let $k=\lfloor r/2 \rfloor$. The creation cost of player $v$ clearly increases by $\alpha$, while her distance cost decreases by at least $\sum_{i=1}^{k} \big((r+2-2i) n_{i}\big)$. Since, $r+2-2i$ is strictly positive and monotonically decreasing w.r.t. $i$, using Lemma~\ref{lemma:lower_bound_vertices_subtrees} we can observe that the distance cost of player $v$ is minimized when, for every $i=1,\dots,k$, $n_{r-i}$ is minimum, i.e., when $n_{r-i} \geq \frac{n}{2^{i}}$. Therefore, the distance cost decrease of $v$ is lower bounded by \vspace*{-0.2cm}
\begin{align*}
\sum_{i=1}^{k} \big((r+2-2i) n_{i}\big) 	& \geq \sum_{i=1}^{k} \big((r+2-2i) n/2^i\big) = (r+2)n\sum_{i=1}^{k}\frac{1}{2^i}-2n\sum_{i=1}^k \frac{i}{2^i}\\
											& = (r+2)n\left (1-\frac{1}{2^k}\right)-2n\left(\frac{k/2^{k+2}-(k+1)/2^{k+1}+1/2}{(1/2-1)^2}\right)\\
											& = (r+2)n-\frac{r+2}{2^k}n - \frac{2k}{2^k}n+\frac{4(k+1)}{2^k}n-4n\\
											& = (r-2) n + n\left(\frac{2k+2-r}{2^k}\right)\\
											& \geq (r-2) n,
\end{align*}
where last inequality holds because $r\leq 2k+1$.

Since $(T, \alpha)$ is stable, player $v$ has no incentive to buy the edge $(v,c_1)$. Hence, $\alpha - (r-2)n \geq 0$, i.e., $r\leq {\alpha}/{n}+2$. This implies that the length of the path from the centroid to any leaf of $T$ is at most ${\alpha}/{n}+4$. Thus, the diameter of $T$ is less than or equal to $2{\alpha}/{n}+8$. Since in every tree of $n$ vertices, there are $2(n-1)$ distinct pair of vertices at distance 1, while the other $(n-1)(n-2)$ pairs are at distance of at most $2{\alpha}/{n}+8$, the upper bound on the social cost of $T$ is \vspace*{-0.1cm}
\begin{align*}
\completecost(T) 	& = \alpha(n-1)+\left(\frac{2\alpha}{n}+8\right)(n-1)(n-2)+2(n-1) \\
					& = \frac{8n^2+(3\alpha-14)n-4\alpha}{n}\cdot (n-1).
\end{align*}\vspace*{-0.1cm}
The cost of the social optimum on $n$ nodes, which for $\alpha\geq 2$ is the star~\cite{Fab03}, is
\begin{equation*}
\completecost(S_n)=\alpha(n-1)+2(n-1)(n-2)+2(n-1)=\big(2n+\alpha-2\big)(n-1).
\end{equation*}
Therefore,
\begin{equation*}
\frac{\completecost(T)}{\completecost(S_n)} = \frac{8n^2+(3\alpha-14)n-4\alpha}{2n^2+(\alpha-2)n} = 3+\frac{2n^2-8n-4\alpha}{2n^2+(\alpha-2)n}. \qedhere
\end{equation*}
\end{proof}

\section{Conclusion}
In this paper we have opened a new line of attack on settling the tree conjecture and on proving a constant price of anarchy for the network creation game for all~$\alpha$. Our technique is orthogonal to the known approaches using bounds on the average degree of vertices in a biconnected component. We are confident that our methods can be refined and/or combined with the average degree technique to obtain even better bounds -- ideally proving or disproving the conjectures. 

Another interesting approach is to modify our techniques to cope with the so-called max-version of the network creation game~\cite{De07}, where agents try to minimize their maximum distance to all other nodes, instead of minimizing the sum of distances. Also for the max-version it is still open for which $\alpha$ all stable networks must be trees. 

~

\emph{Acknowledgement: We thank our anonymous reviewers for their helpful comments.}


	\bibliographystyle{plainurl}
	\bibliography{ncg_arxiv_new}

\end{document}